\documentclass[preprint,numafflabel, 3p]{elsarticle}
\usepackage[utf8]{inputenc}
\usepackage{amsthm}
\usepackage{float}
\usepackage{graphicx}
\usepackage{mathdots}
\usepackage{verbatim}
\usepackage{amsmath}
\usepackage{amssymb}
\usepackage[dvipsnames]{xcolor}
\usepackage{hhline}
\usepackage{subfig}
\usepackage{amsfonts}
\usepackage{multirow}
\usepackage[colorlinks=true,linkcolor=black, citecolor=blue, urlcolor=blue]{hyperref}
\usepackage[nameinlink,noabbrev]{cleveref}

\newtheorem{thm}{Theorem}[section]
\newtheorem{lma}{Lemma}[section]

\newtheorem{corollary}{Corollary}[thm]
\theoremstyle{definition}
\newtheorem{dfn}{Definition}

\theoremstyle{remark}

\renewcommand{\v}[1]{\boldsymbol{#1}}

\begin{document}
\begin{abstract}
     When the game Lights Out is played according to an algorithm specifying the player's sequence of moves, it can be modeled using deterministic cellular automata. One such model reduces to the $\sigma$ automaton, which evolves according to the 2-dimensional analog of Rule 90. We consider how the cycle lengths of multi-dimensional $\sigma$ automata depend on their dimension. We find that the cycle lengths of 1-dimensional $\sigma$ automata and 2-dimensional $\sigma$ automata (of the same size) are equal, and we prove this by relating the eigenvalues and Jordan blocks of their respective adjacency matrices. We also discover that cycle lengths of higher-dimensional $\sigma$ automata are bounded (despite the number of lattice sites increasing with dimension) and eventually saturate the upper bound. 
     
\end{abstract}
\begin{keyword}Cellular Automata \sep
    Jordan Normal Form \sep
    Finite Fields 
\end{keyword}

\begin{frontmatter}
    \title{Equality of cycle lengths in one- and two-dimensional $\sigma$ automata}
    \author[1]{Avi Vadali\corref{cor1}}
    \ead{avadali@caltech.edu}
    \author[2]{Ari Turner}

    \address[1]{California Institute of Technology, Pasadena, California 91125, USA}
    \address[2]{Department of Physics, Technion, Haifa, Israel}
    
    \cortext[cor1]{Corresponding author}
    \date{April 2022}
\end{frontmatter}


\section{Introduction}
\label{sec:introduction}

The Lights Out game consists of a $5 \times 5$ grid of ``lights'' (each light is ``on'' or ``off'') such that toggling a light flips the state of the light itself and the states of its nearest neighbors. The objective of Lights Out is to turn off all lights. In this work, we assume that the player presses lights according to an explicit rule, which enables a representation of Lights Out as a deterministic cellular automaton.

A simple algorithm for playing Lights Out is to press every light that is on within the current configuration. We refer to this rule of evolution as the $\sigma_2$ automaton, in which cells evolve according to the sum of the previous states of nearest-neighbor cells modulo $2$. The $\sigma_2$ automaton was first explored by Lindenmayer in \cite{lindenmayer}, and Torrence later categorized the set of initial configurations that can be solved by following the $\sigma_2$ evolution \cite{torrence}. One can easily imagine a 1-dimensional analog of the $\sigma_2$ automaton: the $\sigma_1$ automaton which follows the same rule of evolution but now interior cells only possess 2 nearest-neighbors. Dynamics of the $\sigma_1$ automaton have been studied by Martin, Odlyzko, and Wolfram \cite{wolfram}, who found that the length of periodic automata orbits depend chaotically on the size of the system, though this \textit{cycle length} can be partially predicted (see Section~\ref{sec:phi_Jordan_form}). 

Cellular automata possess many fascinating potential configurations such as fixed points (invariant under evolution) and shift points (eventually evolve to a spatially shifted configuration) \cite{Voorhees}. However, in this work we are interested in characterizing general behavior of an automaton rather than analyzing particular configurations. We study the least common multiple of the cycle length of all configurations (automaton cycle length), and we analyze the relationship between this quantity and the automaton dimension. 

Numerical data for the cycle lengths of one and two dimensional automata, $\sigma_1$ and $\sigma_2$,  (see Table~\ref{tab:cycle_table}) reveal that these two cycle lengths are equal to one another despite their chaotic dependence on automaton size. The similar orders of magnitude of cycle lengths of $\sigma_1$ and $\sigma_2$ are related to the linearity and translational symmetry (see Section~\ref{sec:conj_rel_CL} and Section~\ref{sec:geom_results}) of these automata; however, their exact equality is a property specific to these two automata. 

To begin studying the equality of cycle lengths, we first understand the relationship (established by Guan and He~\cite{guanHe}) between the cycle length of an automaton, the size of the Jordan blocks, and the period of the eigenvalues of its adjacency matrix. Next we identify the sizes of the Jordan blocks of the adjacency matrix of $\sigma_1$ and identify convenient expressions for its eigenvalues. We then represent the adjacency matrix of $\sigma_2$ as a Kronecker sum of the adjacency matrix of $\sigma_1$ with itself, which establishes a relationship between their eigenvalues. After relating the sizes of Jordan blocks of the two adjacency matrices, we prove that the cycle lengths of these matrices are equal, hence the cycle lengths of $\sigma_1$ and $\sigma_2$ are equal. Notably, we prove the equality of cycle lengths in spite of not being able to explicitly determine the individual cycle lengths of $\sigma_1$ and $\sigma_2$. 

This paper is organized as follows. The remainder of Section~\ref{sec:introduction} establishes preliminary motivation and notation for understanding the forthcoming study of automata cycle lengths. Section~\ref{sec:CL_and_jordan_form} introduces Jordan normal forms and proves several rules regarding the cycle lengths of Jordan blocks. Section~\ref{sec:phi_Jordan_form} establishes the Jordan form of the adjacency matrix $A$ as well as the form of the eigenvalues of $A$. Section~\ref{sec:2d_1d_rel} introduces the adjacency matrix $T$ and determines the Jordan form of $T$ in terms of the Jordan form of $A$. Section~\ref{sec:CL_sigma} proves the equality of cycle lengths between $\sigma_1$ and $\sigma_2$ automata. Section~\ref{sec:geom_results} and Appendix~\ref{sec:geo_proofs} provide geometric intuition for the relationship between automaton cycle length and dimension, and Section~\ref{sec:CL_highd} bounds cycle lengths of higher dimensional $\sigma$ automata.

\subsection{Automata preliminaries}

The cellular automata under consideration are $\sigma_1(n)$ and $\sigma_2(n)$, where the former denotes an $n \times 1$ automaton and the latter indicates an $n \times n$ automaton. An active (inactive) cell is denoted by 1 (0), so the state of all cells in a configuration is represented by a binary vector $\v{v}$. The automata cells evolve according to the following rule (shown in Fig.~\ref{fig:cell_evo}): the state of a cell after $t + 1$ evolutions is the sum modulo 2 of the states of its nearest neighbors after $t$ evolutions (not including the cell itself). The cells at the ends of automaton have only one nearest-neighbor, but one can extend the automaton by one cell at each end and fix these cells to always be $0$. This procedure is equivalent to taking null boundary conditions at the automaton edge. 

This evolution is a linear rule, and it can be described by $\v{v}\rightarrow A\v{v}$, where $A$ is the adjacency matrix.

\begin{figure}[H]
    \centering
\subfloat[a][]{\includegraphics[scale=0.48]{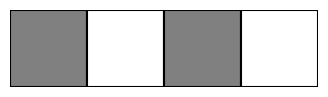}}\subfloat[b][]{\includegraphics[scale = 0.48]{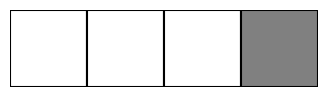}}
\subfloat[c][]{\includegraphics[scale=0.48]{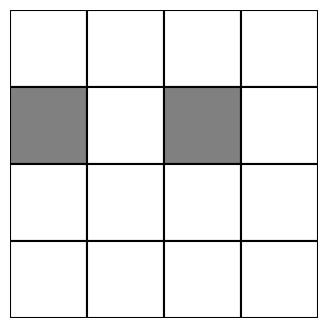}}\subfloat[d][]{\includegraphics[scale=0.48]{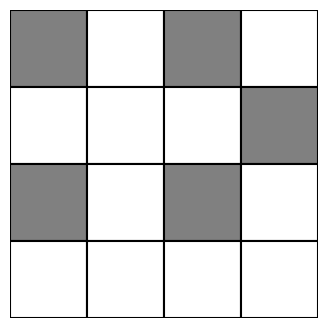}}
    \caption{Evolution of $\sigma_d(n=4)$ automaton configurations after 1 step. Gray (white) squares denote active (inactive) cells. (a, b) shows $\sigma_1(n=4)$ evolution. (c, d) shows $\sigma_2(n=4)$ evolution.} 
    \label{fig:cell_evo}
\end{figure}
For the $\sigma_1(n)$ automaton, $A_n$ is an $n\times n$ matrix, which is given in \cite{sutner} as
$$A_n = 
\begin{pmatrix}
0 & 1 & 0 & \dots &0 & 0 \\
1 & 0 & 1 & \dots &0 & 0 \\
0 & 1 & 0 & \dots &0 & 0 \\
\vdots & \vdots & \vdots & \ddots&0 & 1 \\
0 & 0 & 0 & \dots&1 & 0
\end{pmatrix}
$$
When the automaton dimension is implicit or unimportant, we will denote the adjacency matrix as $A$ rather than $A_n$. The adjacency matrix for the $\sigma_2(n)$ automaton is more complicated and will be discussed in later sections. 

Both $\sigma_1(n)$ and $\sigma_2(n)$ consist of a finite number of cells each of which can take a value in $\{0, 1\}$, so the automaton configuration must eventually repeat and enter a cycle. The beginning of this cycle need not be the initial automaton configuration, which can be shown to occur when some configurations are unreachable via evolution of another state \cite{guanHe, Periodsin2D}.

For an automaton with adjacency the matrix $X$, we can take any initial seed $\v{v}$ and consider its iterates $X^k\v{v}$. The cycle length of $\v{v}$ is defined as the least nonzero $i$ such that 
\begin{equation}
\label{eq_CL_def}
    X^k \v{v} =X^{k+i} \v{v}
\end{equation}
where $X^{k} \v{v}$ is any configuration after the beginning of the cycle\footnote{The cycle of $\v{v}$ is defined as the sequence of iterates $\{X^{k + j} \v{v} \}_{0 \leq j < i}$.}.

The cycle length of the \emph{matrix}, $CL[X]$, is the minimum nonzero $i$ such that $X^k=X^{k+i}$. Observe that the cycle length of $X$ is equivalent to the order of $X$ only when $X$ is invertible.

All configurations $\v{v}$ have $CL[\v{v}]$ that divide $CL[X]$.  This implies that $CL[X]$ can also be defined in terms of configurations as $\mathrm{lcm}_{\v{v}} CL[\v{v}]$.  In fact, in Appendix~\ref{app:max_CL_state} we show that there always exist configurations $\v{v}$ such that $CL[\v{v}] = CL[X]$, so  $CL[X]$ is the maximum of $CL[\v{v}]$. From this point onward, we will use $CL[A_n]$ and $CL[\sigma_1(n)]$ interchangeably to describe the cycle length of an $n \times 1$ $\sigma_1$ automaton.

\subsection{Numerical exploration of cycle lengths}
\label{sec:conj_rel_CL}

Dynamics of cellular automata are generally chaotic, so the cycle lengths of automata in different dimensions are typically unrelated. Yet there are certain scalings one might naively expect within cellular automata cycle lengths. For instance, the cycle length of an automaton should increase as the number of cells within the automaton increases. In particular, one might expect the cycle length of an automaton to scale as the total number of configurations: an $n \times 1$ binary automaton would have a cycle length of order $2^n$ and an $n \times n$ binary automaton would have a cycle length of order $2^{n^2}$. The 1-dimensional $\sigma_1$ and 2-dimensional $\sigma_2$ automata break these scalings. The cycle lengths of $\sigma_1(n)$ and $\sigma_2(n)$ are shown in Table~\ref{tab:cycle_table}, and they are equal (apart from $n = 2, 4$). 

\begin{table}[ht]
\begin{center}
\renewcommand{\arraystretch}{1.5}
\begin{tabular}{|c | c c c c c c c c|} 
 \hline
 $n$ & 6 & 7 & 8 & 9 & 14 & 15 & 16 & 36 \\ 
 \hline
 \multirow{2}{*}{$CL[n]$} 
 & 14 & 1 & 14 & 12 & 30 & 1 & 30 & 174762 \\ 
 \cline{2-9}
 & 
 $2^1 \frac{(2^3-1)}{1}$ &
 $2^0 \frac{(2^1-1)}{1}$ &
 $2^1 \frac{(2^3-1)}{1}$ &
 $2^2 \frac{(2^2-1)}{1}$ &
 $2^1 \frac{(2^4-1)}{1}$ &
 $2^0 \frac{(2^1-1)}{1}$ &
 $2^1 \frac{(2^4-1)}{1}$ &
 $2^1 \frac{(2^{18}-1)}{3}$ \\ 
 \hline
\end{tabular}
\end{center}
\caption{Cycle length of eight $\sigma_1(n)$ and $\sigma_2(n)$ automata with size $n$. Here $CL[n] = CL[\sigma_1(n)] = CL[\sigma_2(n)]$ is of the form $2^k \frac{(2^j-1)}{q}$. Usually $q=1$, but there are a few cases (ex. $n=36$) when $q > 1$. These cycle lengths were obtained by comparing powers of the adjacency matrix $A_n$ to one another and determining the first $i < j$ such that $A_n^i = A_n^j$.}
\label{tab:cycle_table}
\end{table}

The fact that the cycle lengths of $\sigma_1(n)$ and $\sigma_2(n)$ are not exponentially different is explained by a result of Martin, Odlyzko, and Wolfram~\cite{wolfram}. They show that for many $d$-dimensional cellular automata with size $n$, the cycle length is a divisor of $2^j(2^k-1)$: $CL[n] = 2^j(2^k-1)/q$. Here $j$ and $k$ are natural numbers that depend on $n$ (see Section~\ref{sec:geom_results} for a more detailed description of the result) but not on the rule of evolution or the automaton dimension $d$. This divisor rule is true for all automata abiding by two properties. First, the rule of evolution is linear: the evolved state of a cell is the sum modulo 2 of the previous states of specific neighboring cells.  Second, the rule is homogeneous: each cell evolves according to the same rule. Despite the $\sigma_1$ and $\sigma_2$ automata being inhomogeneous at the boundaries, they can be mapped to homogeneous systems (see Section~\ref{sec:reflection}), so the cycle length divisibility rule holds. This divisibility rule does not explain why $CL[\sigma_1] = CL[\sigma_2]$, as the factor of $q$ in $CL[n] = 2^j(2^k-1)/q$ can depend on the dimension and rule of evolution of the automaton. For $n = 36$, we observe $q = 3$ for both $\sigma_1(36)$ and $\sigma_2(36)$, which provides convincing numerical evidence that $CL[\sigma_1] = CL[\sigma_2]$ is not a coincidence. In Section~\ref{sec:CL_sigma}, we prove this equality of cycle lengths for $\sigma_1$ and $\sigma_2$ automata of any size.

\section{Cycle lengths and Jordan form}
\label{sec:CL_and_jordan_form}
In order to prove the equality of cycle lengths of $\sigma_1(n)$ and $\sigma_2(n)$, we show equality of the cycle lengths of their respective adjacency matrices, using the Jordan form of the adjacency matrix as did Guan and He in \cite{guanHe}. The cycle length of an adjacency matrix is equal to the least common multiple of the cycle lengths of each Jordan block with non-zero eigenvalue, and this may be written in terms of the size of the Jordan block and the order of the eigenvalue.

Every square matrix $A$ has a Jordan normal form $J$ such that $A = PJP^{-1}$ for a block-diagonal $J$. Moreover, $A^k = PJ^kP^{-1}$, so the cycle length of $A$ is identical to that of $J$. The cycle length of $J$ is the least common multiple of the cycle length of each block $B$ within $J$. Let $B$ be an $m\times m$ Jordan block of $J$ with eigenvalue $\lambda$. The entries along the $q^\mathrm{th}$ diagonal above the main diagonal of $B^k$ are 
\begin{equation}
\label{eq:jb_entries}
    {B^k}_{(j, q + j)} = \binom{k}{q} \lambda ^ {k - q}
\end{equation}
for $0 \leq q < m$ and $0 \leq j \leq m - q - 1$, where $j = 0$ denotes the first row and $q = 0$ denotes the first column. 


The Jordan blocks $B$ with eigenvalue $\lambda$ come in two varieties: $\lambda = 0$ and $\lambda \neq 0$. When $\lambda = 0$, the block $B$ must fully annihilate\footnote{This occurs when $B$ is raised to a sufficiently high power.} in order for $J$ to cycle. When $\lambda \neq 0$, $B$ is invertible and possesses a period ($\exists k > 0$ such that $B^k = I$). Thus the cycle length of $J$, and $CL[A]$, is equal to the least common multiple of the periods of $B$ with non-zero eigenvalue.

To find when $B^k = I$, we first identify when the off-diagonal entries of $B^k$ vanish, which is determined in Theorem.~\ref{thm:lucas}.
\begin{thm}
    \label{thm:lucas}
    The least integer $p$ such that $\binom{p}{q} = 0$ for all $0 < q < m$ is $p = 2^k$. Here $2^k$ is the least power of $2$ that is $\geq m$.
\end{thm}
\begin{proof}
    \par Recall Lucas's Theorem \cite{fine}, which posits that $\binom{p}{q} = \prod_{j=0}^b \binom{p_j}{q_j} \bmod 2$ where $p_j$ and $q_j$ are the $j$th digits of the base-2 expansions of $p$ and $q$ respectively, and $b+1$ is the number of digits of $p$. When one of the factors is $\binom{0}{1}$ it is counted as 0 so $\binom{p}{q}= 0$.  (The formula is written with the assumption that zeros are added to the left of $q$'s binary expansion until $p$ and $q$ have the same number of digits.) 
    
    Suppose $p = 2^k$, where $k$ is defined as in the statement of the theorem.
    Then $\binom{p}{q}=0\bmod 2$ for any $q$ between $1$ and $2^k-1$, and therefore for any $q$ such that $0<q<m$:
    All the digits of $p$ aside from the first are 0 in this case, while $q$ must have a 1 to the right of this if $1<q<2^{k-1}$. Therefore $p=2^k$ has the property it is required to have.
   
    \par Now show that $2^k$ is the least $p$ with this property: Let $p < 2^k$ and show that $\binom{p}{q}= 1\bmod 2$ for at least one $q$ satisfying $0<q<m$.  Define $q$ by starting with $p$ and replacing all its digits by zeros except for the first ``1". Then $\binom{p}{q}= 1\bmod 2$ by Lucas's theorem. This value of $q$ is less than $m$ because it is a power of 2 that is smaller than $2^k$.

    Thus the least integer $p$ such that $\binom{p}{q} = 0$ for all $0 < q < n$ is $p = 2^k$.
\end{proof}

In order for $B^k = I$ to be satisfied, not only must all off-diagonal entries vanish, but the diagonal entries must also equal $1$. This requires accounting for the period of the block's eigenvalue $\lambda$, which Guan and He do in Theorem 3.4 of \cite{guanHe} for a field of general character. We now present the relationship between $CL(B)$ and the period of $\lambda$ for a field of characteristic $2$. 

\begin{thm}
\label{thm:jb_period_1}
Let $J$ be an $m \times m$ Jordan block corresponding to an eigenvalue $\lambda \neq 0$. Let $s$ be the least integer such that $m \leq 2^s$. Let $t$ be the least positive integer such that $\lambda^t=1$. Then the period of $J$ is $t2^s$.
\end{thm}
\begin{proof}
In order to determine the period of $J$, we consider the equation $J^k = I$, and we solve for the smallest $k$. Since the diagonal of $I$ is composed of all $1$'s, the diagonal entries of $J^k$, that is, $\lambda ^k$, must also equal $1$. Thus the period of $J$ must be a multiple of $t$, the multiplicative order of $\lambda$.
\par For $J^k$ to equal the identity, all non-diagonal entries must equal $0$. By Equation~\ref{eq:jb_entries} this requires that $\binom{k}{i} = 0$ for $1 \leq i < m$. By Theorem~\ref{thm:lucas}, the least integer $k$ satisfying this condition is $2^s$.
\par Taken together, these two constraints on the period of $J$ imply that the period of $J$ is equal to $lcm(t, 2^s)$. The non-zero elements of the field form a cyclic group of odd order (since the number of elements of the field is a power of 2), so the order of any element must be odd. Hence $t$ is odd, so $lcm(t, 2^s) = t2^s$. 
\end{proof}
Using the above theorem, we note that
\begin{thm}
    The cycle length of a matrix (with entries in $\mathbb{Z}_2)$ is $t2^s$ where $t$ is the least common multiple of the orders of the nonzero eigenvalues. Here $s=\lceil \log_2 m\rceil$ such that $m$ is the size of the largest Jordan block with a nonzero eigenvalue.
\label{thm:PeriodsFromJN}
\end{thm}

\section{Jordan form of the $\sigma_1$ automaton}
\label{sec:phi_Jordan_form}

In this section, we determine the Jordan normal form of the adjacency matrix $A_n$, which is used to study $\sigma_1$ dynamics (as done in \cite{guanHe, characteristicPolynomialRule150, sutner}). We present the Jordan normal form of $A_n$ in a way that enables the identification of a simple relationship between cycle lengths of $\sigma_1$ and $\sigma_2$ automata.

\subsection{Eigenvalues of $A_n$}

The characteristic polynomial $p_n(\lambda)$ of $A_n$ factors into polynomials irreducible over GF(2), so we extend GF(2) to the splitting field of $p_n$, which we denote by $\mathbb{F}$. This extension ensures $A_n$ possesses a complete set of eigenvalues within $\mathbb{F}$. Later, we describe the splitting field of $p_n$ and use it to give an upper bound for $CL[A_n]$.

The following theorem is proven in Ref.~\cite{sutner}:
\begin{thm}
  \label{min_p_co}
Each eigenvalue $\lambda$ of $A_n$ has exactly 1 eigenvector, and thus corresponds to exactly one Jordan block with size equal to the algebraic multiplicity of $\lambda$ in $p_n$.
\end{thm}

Note that $p_n(\lambda)$ is a Chebyshev polynomial of the 2\textsuperscript{nd} kind~\cite{barRam} satisfying the recursion
\begin{equation}
\label{eq:cheb_recur}
    p_i(\lambda) = \lambda p_{i-1}(\lambda) + p_{i-2}(\lambda)
\end{equation}
with initial conditions $p_0(\lambda)=1, p_1(\lambda)=\lambda$. Chebyshev polynomials modulo 2 satisfy several identities which will help determine the multiplicities of eigenvalues $\lambda$: 
\begin{thm}
\label{thm:poly_recur}
$p_{2i} (\lambda) = (p_{i}(\lambda) + p_{i-1}(\lambda))^2$, 
$p_{2i + 1}(\lambda) = \lambda ({p_i}(\lambda))^2$
\end{thm}
These recursions (proven in \cite{sutner}) can be derived inductively from Eq.~\eqref{eq:cheb_recur}, and they follow from $(a + b)^2 = (a^2 + b^2) \bmod 2$.

\begin{lma}
\label{lma:even_alg_mult}
All roots of $p_{2k}$ are nonzero and have algebraic multiplicity $2$.
\end{lma}
\begin{proof}
First note that Eq.~\eqref{eq:cheb_recur} gives $p_{2k}(0) = 1 \implies 0$ is never a root of $p_{2k}$. By Theorem \ref{thm:poly_recur}, $p_{2k} = (p_k + p_{k-1})^2$. Then to show all roots have multiplicity 2, it is sufficient to prove that $p_k+p_{k-1}$ does not have repeated roots, which is implied by $\frac{d}{d\lambda}[p_k + p_{k-1}]$ and $[p_k + p_{k-1}]$ being relatively prime.

For even $k$, $p_{k-1} = \lambda {p_a}^2$ for $a = \frac{k-2}{2}$. As a result, $\frac{d}{d\lambda}[p_k + p_{k-1}] = p_a^2$. The derivative of $p_k$ vanishes modulo 2 since all terms in $p_k$ have even exponents. Thus all roots of $\frac{d}{d\lambda}[p_k + p_{k-1}] = {p_a}^2$ are roots of $p_a^2$, and hence of $p_{k-1}$.  
Moreover, $p_k$ and $p_{k-1}$ are relatively prime~\cite{barRam}, so the roots of $p_{k-1}$ cannot be roots of $p_k$, proving that $\frac{d}{d\lambda}[p_k + p_{k-1}]$ and $[p_k + p_{k-1}]$ have no roots in common. This implies that $p_k + p_{k-1}$ has no repeated roots.

For odd $k$, using $p_k=\lambda p_\frac{k-1}{2}^2$, one can show that $\frac{d}{d\lambda}(p_k+p_{k-1})=p_\frac{k-1}{2}^2$, and therefore $p_k+p_{k-1}$ and its derivative do not have any common roots. Hence all roots of $p_{2 k}$ have multiplicity 2.


\end{proof}
By Theorem \ref{min_p_co} and Lemma \ref{lma:even_alg_mult}, all Jordan blocks of $A_{2k}$ have size $2$. We now determine the sizes of Jordan blocks of $A_{2k + 1}$.

\begin{thm}
\label{thm:odd_alg_mult}
All nonzero roots of $p_{2k + 1}$ have algebraic multiplicity $2^{a+1}$, and $0$ has algebraic multiplicity $2^{a} - 1$ where $a$ is the largest integer such that $2^a | 2k+2$.
\end{thm}

\begin{proof}
\par Let $n=2k+1$. Repeatedly applying the reduction $p_{2r+1} (\lambda) = \lambda {p_r}^2$ from Theorem \ref{thm:poly_recur} shows that $p_{2k+1} (\lambda) = \lambda ^ {2^{x} - 1} {p_i}^{2^x}$, where $i$ and $x$ are natural numbers. This can be continued until $i$ is even. After $x$ reductions, the residual Chebyshev polynomial is $p_i$ where $i = \frac{2k + 2}{2^x} - 1.$ This will be even only when $\frac{2k + 2}{2^x}$ is an odd integer, which will occur for the $x$ such that $x$ is the largest integer where $2^x | 2k + 2$. Thus $x = a$.

The multiplicities of the roots of $p_{2k+1}$ can be seen from $p_{2k+1}(\lambda)=\lambda^{2^a-1}p_i^{2^a}(\lambda)$.
Each root of $p_i$ has a multiplicity of $2$ by Lemma \ref{lma:even_alg_mult}, so the multiplicities are $2^a-1$ for $0$ and $2\times2^a$ for nonzero roots.
\end{proof}

Lemma \ref{lma:even_alg_mult} and Theorem \ref{thm:odd_alg_mult} give the multiplicities of the roots of $p_n(\lambda)$. Summarizing them together:
\begin{thm}
    \label{thm:JordanBlocks}
    If $n+1=2^ab$ where $b$ is odd, then for each nonzero eigenvalue of $A_n$ there is one Jordan block whose size is $2^{a+1}$. For the zero eigenvalue, there is one block whose size is $2^a-1$, unless $a=0$.  
\end{thm}

In particular,
\begin{corollary}
All Jordan blocks of $A_n$ corresponding to a nonzero eigenvalue have the same size.
\end{corollary}

We now derive a general form for the eigenvalues of $A_n$. First assume $n$ is even. Consider a field $\mathbb{F}$ extending $\mathbb{Z}_2$ that contains an element of order $n+1$, which we call $\alpha$.
 
\begin{thm}
For even $n$, the $n$ eigenvalues of $A_n$ are equal to $\alpha + \alpha ^{-1}, \alpha ^2 + \alpha ^{-2}, \alpha ^ {3} + \alpha ^ {-3}, \dots, \alpha ^{\frac n2} + \alpha ^{-\frac n2}$. 
\label{thm:eigen_sum}
\end{thm}
\begin{proof}
An eigenvector $\v{v}$ of $A$ is defined by the following property: $A \v{v} = \lambda \v{v}$ where $\lambda$ is a scalar. For convenience, we treat the eigenvectors of $A$ as functions $h(x)$ where $h(x)$ is the x\textsuperscript{th} entry of the eigenvector. Thus the i\textsuperscript{th} eigenvector equation can be written as $h_i(x+1)+h_i(x-1)=\lambda_i h_i(x)$ for $x=1$ to $n$. Null boundary conditions impose the constraint $h_i(0)=h_i(n+1)=0$.

The working field $\mathbb{F}$ can be extended such that for any eigenvalue $\lambda_i$ of $A$, $\lambda_i$ can be written as $\beta _i + {\beta_i}^{-1}$ where $\beta_i \in \mathbb{F}$. Since zero is not an eigenvalue $\beta_i\neq 1$. Then 
\begin{equation}
h_i(x+1)+h_i(x-1)=(\beta_i+\beta_i^{-1})h_i(x),
    \label{eq:recursion}
\end{equation}
This recursion next reveals $h_i(x)$ in terms of $h_i(1)$ for $x \in [1, n]$. Since eigenvectors are invariant under multiplication by a non-zero constant, we can change $h_i$ by a factor so that $h_i(1)=\beta_i + {\beta_i}^{-1}$. When $x=1$, Eq.~\eqref{eq:recursion} gives $h_i(2) = (\beta_i + \beta_i^{-1})^2=\beta_i^2+\beta_i^{-2}$. Repeating this process shows that $h_i(x) =  {\beta_i}^x +  {\beta_i}^{-x}$, for any $x\leq n+1$.  Since this applies for $x = n+1$ and $h_i(n+1)=0$,  $\beta_i ^ {2n+2} = 1$. This implies, since $y^2 = 1 \implies y=1$ in a field of characteristic 2, that $\beta_i^{n+1}=1$.

If $\mathbb{F}$ contains $\beta_i$ for all distinct eigenvalues of $A$, then the field must contain all $n+1$ different solutions to $x^{n+1}=1$. This is a consequence of Lemma \ref{lma:even_alg_mult}, by which there are $\frac n2$ distinct eigenvalues. For each of these eigenvalues there must be two roots of $x^{n+1}$ in $\mathbb{F}$: $\beta_i$ and $\beta_i^{-1}$. Together with 1, this gives $n+1$ distinct solutions.

The elements $\beta_i$ are all powers of a single element since they satisfy $(\beta_i)^{n+1} = 1$. The solutions to this equation form a subgroup (generated by $\alpha$) of the cyclic group of nonzero elements in $\mathbb{F}$. It follows that the distinct eigenvalues of $A$ are of the form $\alpha^i+\alpha^{-i}$. We can restrict $1\leq i\leq\frac n2$ since $i$ outside this range repeats eigenvalues. 
\end{proof}
We can now also describe the eigenvalues of $A_n$ when $n$ is odd. While proving Theorem~\ref{thm:odd_alg_mult}, we showed that for $n$ odd, $p_n$ can be written in terms of $p_i$ for some even $i < n$. In particular, for $n+1=2^a b$ where $b$ is odd, $p_n(\lambda)=\lambda^{2^a-1}p_{b-1}(\lambda)^{2^a}$. This implies that

\begin{thm}
    For $n+1=2^a b$ where $b$ is odd, the nonzero eigenvalues of $A_n$ are identical to the eigenvalues of $A_{b-1}$. These eigenvalues are $\alpha^k+\alpha^{-k}$ where $\alpha^b = 1$, and $1\leq k\leq \frac{b-1}{2}$. The Jordan normal form of $A_n$ has one $2^{a+1}\times 2^{a+1}$ block for each eigenvalue, and has a $(2^a-1)\times (2^a-1)$ block for eigenvalue 0. When $n+1$ is a power of 2, there is only one block whose eigenvalue is 0.
\label{thm:powersof2}
\end{thm}

So by Theorem \ref{thm:PeriodsFromJN},
\begin{corollary}
    $CL[A_n] = 2^{a+1}t$ where $t$ is the least common multiple of the orders of the elements $\alpha^k+\alpha^{-k}$ of the field $\mathbb{F}$, and where $\alpha$ is a $b^\mathrm{th}$ root of unity.
    \label{cor:CL}
\end{corollary}
The orders of eigenvalues $\alpha^j+\alpha^{-j}$ are difficult to determine, so we cannot directly determine the cycle length of $A_n$. 



\subsection{The Fields of $\sigma_1$ Automata and the Eigenvalue Orders}
\label{sec:phi_fields}

The cycle lengths of $\sigma_1$ and $\sigma_2$ automata appear to vary chaotically (see Table~\ref{tab:cycle_table}). As previously mentioned, there are some patterns described in \cite{wolfram} that show the cycle lengths are of the form $2^k(2^j-1)/q$ (where $k$ and $j$ can be predicted). We now give a different explanation of this result using the Jordan normal forms\footnote{The results in \cite{wolfram} are derived using generating functions.}.   

\begin{thm}
    \label{thm:rdm_sord}
    For $n+1=2^ab$ where $b$ is odd, $CL[A_n] = 2^{a+1}\frac{2^{\mathrm{sord}_2(b)}-1}{q}$. Here $\mathrm{sord}_2(b)$ is defined as the least $j$ such that $2^j = \pm 1 \bmod b$ (as in \cite{wolfram}) and $q$ is a divisor of $2^{\mathrm{sord}_2(b)}-1$.
\end{thm}
\begin{proof}
    Let $CL(A_n) = 2^st$ for $t$ odd. By Corollary \ref{cor:CL}, $s=a+1$. Here $t$ is the least common multiple of the orders of the eigenvalues of $A_n$, and $t$ can be understood better by considering the field used to diagonalize $A_n$. 
    
    The field $\mathbb{F}$ used to diagonalize $A_n$ can be any field containing all its eigenvalues. Since this field has a characteristic of 2, it contains $2^j$ elements (for some integer $j$), and the set of nonzero elements within $\mathbb{F}$ forms a cyclic group $\mathbb{G}$ of order $2^j - 1$. Thus the orders of all eigenvalues divide $2^j-1$, so $t | 2^j - 1$. 
    
    When $\mathbb{F}$ is the smallest field that contains all eigenvalues of $A_n$, the splitting field of $p_n$, $t$ is maximally constrained. Given the form of the eigenvalues $(\alpha^i + \alpha^{-i})$, it is clear that any field containing a $b^\mathrm{th}$ root of unity $\alpha$ contains all eigenvalues of $A_n$. Since the nonzero elements of any finite field form a cyclic group under multiplication, there will be a $b^\mathrm{th}$ root of unity if $b|2^j-1$. This condition is equivalent to $2^j=1\bmod{2}$. Thus if $\mathrm{ord}_2(b)$ is the least power $j$ such that $2^j=+1\bmod{2}$, then a field with $2^j$ elements contains a $b^\mathrm{th}$ root of unity so $t|2^{\mathrm{ord}(b)}-1$.
    
    There is often a field with order less than $2^{\mathrm{ord}_2(b)}$ containing all eigenvalues, because the field need not contain $\alpha$ itself, just elements of the form $\alpha^i+\alpha^{-i}$. Given any set of nonzero elements $S$ in a field, the order of the smallest field containing them is $2^j$ where $j$ is the least integer such that $\lambda^{2^j}=\lambda$ for each $\lambda$ in $S$ (see Theorem 15.7.3c of \cite{artin}). Applying this to the eigenvalues yields  
    \begin{equation}
        \alpha^{i 2^j}+\alpha^{-i 2^j}=\alpha^i +\alpha^{-i}
    \end{equation}
    The above is equivalent to $\alpha^{i 2^j} = \alpha^{\pm i}$ via $x+x^{-1}=y+y^{-1} \implies x=y$ or $y^{-1}$. This holds when $2^j \equiv \pm 1 \bmod b$, since $b$ is the order of $\alpha$. Thus, $j=\mathrm{sord}_2(b)$ satisfies this property.  So the smallest field containing the eigenvalues has $2^{\mathrm{sord}_2(b)}$ elements and $t|2^{\mathrm{sord}_2(b)}-1$.
\end{proof}

The unpredictable relationship between $b$ and $\mathrm{sord}_2(b)$ mostly explains the apparent chaos in the cycle lengths of $\sigma_1$ (see Table \ref{tab:cycle_table}). The cycle length of $\sigma_1, \sigma_2$ sometimes takes the form $2^{a+1}\frac{(2^{\mathrm{sord}_2(b)}-1)}{q}$ where $q$ is a factor of $2^{\mathrm{sord}_2(b)}-1$ other than 1. This form occurs if the lcm of the orders of the eigenvalues is less than the size of the cyclic group $\mathbb{G}$. In general, $q$ is quite difficult to determine, and it is responsible for the remainder of the chaos observed in the cycle lengths of $\sigma_1$ and (as we prove later) $\sigma_2$.


\section{The $\sigma_2$ automaton}
\label{sec:2d_1d_rel}

\par This section considers the Jordan normal form of the two-dimensional automaton $\sigma_2$. We will derive the properties of its adjacency matrix from those of the one-dimensional automata $\sigma_1$ (see Theorem \ref{thm:powersof2}).  
To represent $\sigma_2$ by a matrix, we will write the state of an $n \times n$ $\sigma_2$ automaton as an $n^2 \times 1$ vector where the $1^\mathrm{st}$ $n$ entries of the vector are the states of the bottom-most row of the automaton, then the next $n$ entries are the states of the $2^{\mathrm{nd}}$ row from the bottom, and so on.

Recall that the rule for evolution of $\sigma_2(n)$ is that at each step, the state of one of the variables changes to the sum modulo 2 of the states of the horizontal and vertical nearest-neighbor variables. Suppose $I_n$ denotes the $n \times n$ identity matrix and $\otimes$ is the Kronecker product. The matrix $I_n\otimes A_n$ evolves the rows of a $\sigma$ automaton according to the $\sigma_1$ rule, and the matrix $A_n\otimes I_n$ evolves the columns of a $\sigma$ automaton according to the $\sigma_1$ rule. Hence the sum of these two matrices describes one step of evolution for the $\sigma_2$ automaton (see \cite{sarkBar}). Thus the adjacency matrix of the $\sigma_2$ automaton is given by 
\begin{equation}
\label{eq:TA_relk}
    T_{n} = A_n \otimes I_n + I_n \otimes A_n
\end{equation}

\par This operation is also called the \textit{Kronecker sum}:  $T_n = A_n \oplus A_n$ (see \cite{schacke}). In Section~\ref{sec:T_jordan} it will be useful to use the Kronecker sum of two matrices $M$ and $N$ of different dimensions.  This is defined because the identity matrices in the sum can be adjusted to accommodate the dimensions of $M$ and $N$.

\par A more explicit form of the adjacency matrix $T$ is given by
$$T =
\begin{pmatrix}
A & I & 0 & \dots & 0 \\
I & A & I & \dots & 0 \\
0 & I & A & \ddots & 0 \\
\vdots & \vdots & \ddots & \ddots & I \\
0 & 0 & 0 & \dots & A
\end{pmatrix}
$$
This tridiagonal form is much more tractable than writing out the full matrices, and it will prove useful for computing powers of $T$. 


To find the Jordan form of $T_n$, one can first transform the two $A_n$'s in the Kronecker sum into their Jordan form. Taking the Kronecker sum of these matrices gives a simplified matrix that is conjugate to  $T_n$. The Kronecker sum of two matrices in a block diagonal form, with $n$ and $m$ blocks respectively can be changed (by rearranging the rows and columns) into a matrix of block diagonal form with $nm$ blocks, the Kronecker sums of all pairs of one block from each matrix.
Thus, if $A$'s Jordan form has $n$ blocks, $T$ will have $n^2$ blocks (not necessarily in Jordan form).

To find the cycle length of $T$ we now need to understand the Jordan form of these blocks.  The Kronecker sum of two Jordan blocks with eigenvalues $\mu$ and $\theta$ has just one eigenvalue, $\mu+\theta$, because the eigenvalues of $L\oplus B$ are the sums of all combinations of eigenvalues of $L$ and $B$ (see \cite{schacke}). It may consist of more than one Jordan block, however~\cite{norman,barry}. In order to find the cycle length of $T$ it will be necessary to find the size of the largest block, as this determines its cycle length (by Theorem \ref{thm:PeriodsFromJN}).

\subsection{Finding The Sizes of Jordan Blocks of $T$}
\label{sec:T_jordan}
\par 

\par 


Consider a part of $T$ which is a sum of a pair of Jordan blocks of $A_n$, say $J_\lambda,J_\mu$. We need to find the largest Jordan block in it. There are three cases depending on whether the eigenvalues are zero or nonzero.

\begin{thm}
Suppose $\lambda$ and $\mu$ are two eigenvalues of $A$. If $\lambda \neq 0$ or $\mu \neq 0$ the size of the largest Jordan block in $J_\lambda\oplus J_\mu$ is $2^{a+1}$ (where $2^{a+1}$ is the multiplicity of the nonzero eigenvalues as in Theorem~\ref{thm:powersof2}). If $\lambda = \mu = 0$, then the size of the largest Jordan block in $J_\lambda\oplus J_\mu$ is $2^{a}$.
\label{thm:jordan_eigen_sum}
\end{thm}
\begin{proof}
\par We prove this in detail only for the case where both $\lambda \neq 0$ and $\mu \neq 0$, as the proofs for the other cases follow the same ideas.
\par Let $J_{\lambda}$ and $J_{\mu}$ be the Jordan blocks in $A_n$ corresponding to the eigenvalues $\lambda$ and $\mu$ respectively. They are $m\times m$ matrices where $m=2^{a+1}$. Working out the form of $J_\lambda\otimes I$ and $I\otimes J_\mu$ and adding them, we find that the Kronecker sum is \begin{equation} J_{\mathrm{sum}}=\begin{pmatrix}
P & I & 0 & 0 & \dots & 0 \\
0 & P & I & 0 & \dots & 0\\
0 & 0 & P & I & \dots & 0\\
\vdots & \vdots & \vdots & \ddots & \vdots & \vdots \\
0 & 0 & \dots & 0 & P & I \\
0 & 0 & \dots & 0 & 0 & P
\end{pmatrix}\label{eq:matrixeigenvalue}
\end{equation}where $I$ is an $m \times m$ identity matrix and $P=J_\mu+\lambda I$, which is an $m \times m$ Jordan block with eigenvalue $\lambda + \mu$. Since this is an upper triangular matrix with the same entry $\lambda+\mu$ in all the diagonal positions,  the characteristic polynomial of $J_{\mathrm{sum}}$ is $c(x) = (x -( \lambda + \mu))^{m^2}$, and $\lambda+\mu$ is the only eigenvalue.  Thus the minimal polynomial will be of the form $q(x) = (x + \lambda + \mu)^h$ since $q(x)$ must divide $c(x)$. The degree of this polynomial, $h$, is the size of the largest Jordan block.
\par To determine $h$ we just need to find the smallest power of $J_{\mathrm{sum}}+(\lambda+\mu)I$ that is equal to zero. 
 We can calculate the powers of $J_\mathrm{sum}+(\lambda+\mu)I$ by using Eq. (\ref{eq:matrixeigenvalue}). This has the form of a Jordan block, but with each entry replaced by a matrix. The ones above the diagonal are replaced by the identity and the eigenvalue is replaced by the matrix $P$. As in Section \ref{sec:CL_and_jordan_form}, a block $i$ diagonals above the main diagonal of $(J_\mathrm{sum}+(\lambda+\mu)I)^h$ will be  $\binom{h}{i}(P + \lambda I + \mu I)^{h-i}$ where $0 \leq i \leq 2^{a+1}-1$ ($i=0$ represents the main diagonal). For this to be zero, the main diagonal entry, $(P+(\lambda+\mu)I)^h$ must equal zero.  Since this is a $2^{a+1}\times 2^{a+1}$ Jordan block with zero as an eigenvalue, $h$ must be at least $2^{a+1}$. For $h=2^{a+1}$ the other diagonals also become zero because the binomial coefficient $\binom{2^{a+1}}{i} \bmod 2 = 0$ for $1\leq i\leq 2^{a+1}-1$. Therefore $h=2^{a+1}$.

For the other cases the argument is very similar.  First let one eigenvalue be zero and the other be nonzero.  To calculate using the expression for the matrix above, it is easiest to let $\lambda$ be zero. Then $J_\lambda$ is a $(2^a-1)\times (2^a-1)$ matrix, so the matrix is an array of blocks with $2^{a}-1$ blocks in each row and column, and the sizes of the blocks are $2^{a+1}\times 2^{a+1}$.  Since $P$'s size remains the same, while there are fewer blocks, $h$ is still $2^{a+1}$. If $\mu$ and $\lambda$ are both zero, then the powers of the block on the diagonal reach zero when $h=2^a-1$, but the other blocks are not zero until $h=2^a$, since the binomial coefficients $\binom{2^a-1}{i}$ are not even.
\end{proof}




This implies 
\begin{thm}
For each nonzero eigenvalue of $T_n$, the largest Jordan block
is $2^{a+1}\times 2^{a+1}$ where $2^{a+1}$ is the largest power of 2 that divides $n+1$, if there are any non-zero eigenvalues.
\label{thm:T_jordan_block}
\end{thm}
This determines the largest power of 2 in the cycle length. Additionally, the size of the largest block with eigenvalue $0$, $2^a-1$, determines the longest number of steps before a configuration starts cycling (as the blocks with eigenvalue 0 must fully annihilate before the cycle begins).

In this section, we solved the simple problem of finding the size of the largest Jordan block of the Kronecker sum of $A_n$ with itself. However, there is in general a fascinating theory of the entire Jordan decomposition of the 
Kronecker sum of Jordan blocks ~\cite{norman,barry}.

\section{Cycle lengths of $\sigma_2$ Automata}
\label{sec:CL_sigma}

We will now show that $CL[A_n]=CL[T_n]$. We begin by determining what $n$'s have the property that all cells become inactive for any initial configuration (this is given by \cite{wolfram} for the one dimensional case).
\begin{thm}
If $n+1$ is a power of 2, then some sufficiently large power of $A_n$ and $T_n$ is zero
\label{thm:lightsoutalways}
\end{thm}
\begin{proof}
If $n=2^a-1$, then by Theorem
\ref{thm:powersof2}, $A_n$
has only zero as an eigenvalue, and it has just one Jordan block, whose size is $n$. So $A_n^n=0$. Thus
$T_n$ is the Kronecker sum of two Jordan blocks with eigenvalue 0, both of size $2^a-1$, so its only eigenvalue is zero.  Its largest Jordan block is $2^a\times 2^a$ by Theorem \ref{thm:jordan_eigen_sum}. So $T_n^{n+1}=0$.
\end{proof}

 In particular, this implies that if $n+1$ is a power of 2, then the cycle length is one in both one and two dimensions. So we must now consider the rest of the possible values for $n$, and show that $CL[A_n]=CL[T_n]$ for them. We will therefore assume $n+1$ is not a power of 2 for the rest of this section. 

First, recall that the cycle lengths $CL[A_n]$ and $CL[T_n]$ are related to orders of eigenvalues that are difficult to 
predict as discussed in Section~\ref{sec:phi_fields}.
The eigenvalues of $T_n$ are sums of eigenvalues of $A_n$; however, the order of a sum of field elements is not related to the orders of the constituent elements by a general rule. Thus to prove that the cycle lengths of $\sigma_1(n)$ and $\sigma_2(n)$ are equal, we derive an additional relationship between the eigenvalues.



\begin{thm}
\label{thm:sum=prod}
For any $n$, the sum of any two distinct nonzero eigenvalues of $A_n$ is equal to the product of two nonzero eigenvalues of $A_n$.
\end{thm}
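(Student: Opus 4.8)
The plan is to compute with an explicit description of the spectrum of $A$. I would take as known (it is forced by the form of the $\Phi$ rule, and is presumably recorded earlier) that the eigenvalues of $A$ are the numbers $\lambda_k=\zeta^k+\zeta^{-k}$, where $\zeta$ is a root of unity of some odd order $m$ and $k$ runs over the residues modulo $m$; since $\lambda_k=\lambda_{-k}$, one may instead let $k$ run over $0,1,\dots,\lfloor m/2\rfloor$. The oddness of $m$ is precisely where the hypothesis on $n$ enters (together with the ambient characteristic $2$, in which every root of unity has odd order), and it is the one feature of the setup that the argument really relies on.

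The first step is a purely formal identity, valid in any commutative ring: multiplying out,
\[
\lambda_c\,\lambda_d=(\zeta^c+\zeta^{-c})(\zeta^d+\zeta^{-d})=(\zeta^{c+d}+\zeta^{-(c+d)})+(\zeta^{c-d}+\zeta^{-(c-d)})=\lambda_{c+d}+\lambda_{c-d},
\]
with all indices read modulo $m$. So every product of two eigenvalues of $A$ is automatically a sum of two (other) eigenvalues of $A$; the content of the theorem is the reverse implication.

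The second step inverts this. Given two distinct eigenvalues $\lambda_a$ and $\lambda_b$, I want indices $c,d$ with $\{c+d,\,c-d\}\equiv\{a,b\}\pmod m$ — the sign ambiguity in an index is harmless because $\lambda_k=\lambda_{-k}$. Solving $c+d\equiv a$ and $c-d\equiv b$ forces $c\equiv 2^{-1}(a+b)$ and $d\equiv 2^{-1}(a-b)\pmod m$, which makes sense exactly because $2$ is a unit modulo the odd number $m$. Then $\lambda_a+\lambda_b=\lambda_c\lambda_d$ by the identity above, which is the desired conclusion. To finish one must check that $c$ and $d$ actually index eigenvalues of $A$: if $c\equiv 0$ or $d\equiv 0\pmod m$ then $a\equiv\pm b$, forcing $\lambda_a=\lambda_b$ against our assumption, so after folding by $k\mapsto -k$ both indices land in the admissible range.

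The algebraic core above is short; I expect the only real friction to be bookkeeping — pinning down the exact eigenvalue set of $A$, keeping indices straight modulo $m$, and using the symmetry $k\leftrightarrow m-k$ so that the constructed $c$ and $d$ genuinely name eigenvalues rather than spurious values. A minor side point worth recording is that the same computation shows the two factors can even be taken distinct whenever $\lambda_a$ and $\lambda_b$ are both nonzero, with $\lambda_c=\lambda_d$ forced only in the degenerate case that one of $\lambda_a,\lambda_b$ vanishes.
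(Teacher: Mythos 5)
Your proposal is correct and is essentially the paper's own argument: both rest on the identity $(\zeta^{c}+\zeta^{-c})(\zeta^{d}+\zeta^{-d})=\lambda_{c+d}+\lambda_{c-d}$ and on the oddness of the root-of-unity order to solve for the indices. The only cosmetic difference is that you invert the system directly using $2^{-1}\bmod m$, whereas the paper splits into cases on the parity of the index difference and uses the symmetry $\lambda_k=\lambda_{b-k}$ to reduce the odd case to the even one --- the same mechanism in different clothing; your nonvanishing check for $\lambda_c,\lambda_d$ matches the paper's as well.
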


\begin{proof} Let $n+1=2^ab$ where $b$ is odd.
By Theorem \ref{thm:powersof2},
the nonzero eigenvalues are $\lambda_j=\alpha^j+\alpha^{-j}$, for $1\leq j\leq \frac{b-1}{2}$. Also, the rest of this sequence, for $\frac{b+1}{2}\leq j<\infty $ repeats just these nonzero eigenvalues, except when it is zero, which happens when $j$ is a multiple of $b$. 

Say we want to sum two distinct eigenvalues, e.g. $\lambda_i$ and $\lambda_{i+j}$.
If $j = 2k$ for an integer $k$,
then 
\begin{align}
    \lambda_i+\lambda_{i+2k}&=\alpha^i+\alpha^{-i}+\alpha^{i+2k}+\alpha^{-i-2k}\nonumber\\
    &=(\alpha^{i+k}+\alpha^{-i-k})(\alpha^{k}+\alpha^{-k})\nonumber\\
    &=\lambda_{i+k}\lambda_k.
    \label{eq:sumproduct}
\end{align}
Because the eigenvalues being added are distinct, $k\neq 0$, so $\lambda_k$ and $\lambda_{i+k}$ are nonzero.

If $j$ is odd, this does not work immediately.  However, since $\lambda_{k} = \lambda_{b - k}$ and $b$ is odd we can rewrite $\lambda_i + \lambda_{i+j}$ as $\lambda_i + \lambda_{i + 2m}$ where $m$ is some integer, so this is equal to $\lambda_{i+m}\lambda_m$.
\end{proof}

We now consider two groups,
the smallest groups $G_A$ and $G_T$ under multiplication that contain all nonzero eigenvalues of $A$ and $T$ respectively. If we can prove that these two groups are equal, it will follow that $A_n$ and $T_n$ have the same cycle length because of the following fact:

\begin{lma}
\label{lma:subets_finite_fields}
 Say $S_1$ and $S_2$ are two subsets of a finite field that do not contain zero. Let $m_1$ and $m_2$ be the least common multiples of the orders of their elements. Let $G_1$
and $G_2$ be the smallest groups under multiplication that contain $S_1$ and $S_2$ respectively. If $G_1=G_2$ then $m_1=m_2$. \end{lma}

\begin{proof}
Let $\gamma$ be an element of $S_1$. Since $S_1\subset G_1$ and $G_1=G_2$, $\gamma$ is also in $G_2$. This implies that it is the product of some elements of $S_2$. If $\gamma=\rho_1\dots \rho_k$ then $\gamma^{m_2}=\rho_1^{m_2}\dots\rho_k^{m_2}=1$, since $\rho^{m_2}=1$ for any element of $S_2$. So $\gamma$'s order is a divisor of $m_2$. This is true for all elements of $S_1$, so as $m_1$ is the least common multiple of their orders, $m_1\leq m_2$. Similarly,$m_2\leq m_1$, so $m_1=m_2$. 
\end{proof}

\begin{thm}
For $A_n$ and $T_n$  with $n \geq 2$ (and not of the form $2^k-1$), the lcm of the orders of the eigenvalues will be equal, unless $n=2$ or 4.
\label{thm:eig_lcm_equal}
\end{thm}

\begin{proof}
\par  First off, because $T = A \oplus A$, the eigenvalues of $T$ are simply all possible sums of two eigenvalues of $A$, which means that all the eigenvalues of $T$ are products of pairs of eigenvalues of $A$ (by Theorem \ref{thm:sum=prod}) or are just equal to eigenvalues of $A$ when one of the eigenvalues added together is 0.
This implies that $G_T\subset G_A$.

If $n$ is odd, then $A$ has zero as an eigenvalue, so sums of two eigenvalues of $A$ include all of $A$'s eigenvalues.
Therefore $G_A\subset G_T$ in that case, so $G_A=G_T$.

Now say $n$ is even. To show that all eigenvalues of $A$ are contained within $G_T$, we first show that $\lambda_1\in G_T$.  First, one element of $G_T$ is
 $\lambda_1+\lambda_3=\lambda_1\lambda_2$ (by Eq. \eqref{eq:sumproduct}).
 But $\lambda_2=\lambda_1^2$, so $\lambda_1^3\in G_T$.  Also $\lambda_3+\lambda_5=\lambda_1\lambda_4=\lambda_1^5$, so $\lambda_1^5\in G_T.$
 Thus $\lambda_1 ^ 3 \lambda_1 ^3 \lambda_1 ^ {-5} = \lambda_1 \in G_T$ since $G_T$ is closed under multiplication. 

 Now for any other eigenvalue $\lambda_i$, the product $\lambda_i\lambda_1$ is in $G_T$ because it is equal to $\lambda_{i+1}+\lambda_{i-1}$.  So $\lambda_i=(\lambda_1\lambda_i)/\lambda_1\in G_T$.

 This argument breaks down if $n=2$ or 4, since the proof assumed $\lambda_3$ and $\lambda_5$ are eigenvalues of $A_n$, but they are not both eigenvalues if $n=2$ or 4 (one of them is zero in each case, while all eigenvalues of $A_n$ are nonzero). In the other cases, $\lambda_3$ and $\lambda_5$ are both eigenvalues. (By Theorem \ref{thm:eigen_sum} any $\lambda_k$ for $1\leq k\leq \frac{n}{2}$ is an eigenvalue, which implies that $\lambda_3$ and $\lambda_5$ are eigenvalues when $n\geq10$. They are also eigenvalues for $n=6,8$ because they can be written as $\lambda_j$ with $j\leq\frac n2$ since $\lambda_i=\lambda_{n+1-i}$.)

Thus we have shown that if $n\neq2,4$  $G_T=G_A$. Lemma~\ref{lma:subets_finite_fields} implies that the lcm of the orders of the eigenvalues of $A_n$ and $T_n$ are equal.

\end{proof}

Now we prove that the cycle lengths of $\sigma_1(n)$ and $\sigma_2(n)$ are equal:

\begin{thm}
An $n \times 1$ $\sigma_1$ automaton and an $n \times n$ $\sigma_2$ automaton have the same cycle length, except for $n=2,4$.
\label{thm:CL_equal}
\end{thm}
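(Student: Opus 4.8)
The plan is to carry the eigenvalue result of Theorem~\ref{thm:eig_lcm_equal} over to the statement about cycle lengths. As explained at the start of this subsection, to obtain $CL(\Phi)=CL(\sigma)$ it suffices to show that the least common multiple of the orders of the eigenvalues of $A$ equals that for $T$; and, as also noted there, that lcm is exactly the exponent of the group of nonzero eigenvalues---namely $G_A$ in the case of $A$ and $G_T$ in the case of $T$. So the theorem reduces to proving that $G_A$ and $G_T$ have the same exponent (with $n$ even, as throughout this section).

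I would first dispose of the easy half, which requires no exception. By Theorem~\ref{thm:sum=prod} every nonzero eigenvalue of $T$ is a sum of two distinct eigenvalues of $A$, hence a product of two eigenvalues of $A$, so $G_T\subseteq G_A$. Since the exponent of a subgroup of a finite abelian group divides the exponent of the whole group, $CL(\sigma)$ always divides $CL(\Phi)$. All the content of the theorem lies in the reverse divisibility, and it is there that the restriction $n\neq2,4$ is needed.

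For the reverse divisibility I would simply quote Theorem~\ref{thm:eig_lcm_equal}: for even $n$ with $n\neq2,4$, the lcm of the orders of the eigenvalues of $A$ equals that for $T$, i.e.\ $G_A$ and $G_T$ have the same exponent, and hence $CL(\Phi)=CL(\sigma)$. Combined with the previous paragraph, this establishes the asserted equality of cycle lengths for all $n$ except $2$ and $4$.

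The main obstacle is precisely what creates the two exceptions, namely the breakdown already flagged in the proof of Theorem~\ref{thm:eig_lcm_equal}: the chain $\lambda_1+\lambda_3=\lambda_1^3$ and $\lambda_3+\lambda_5=\lambda_1^5$ that drags $\lambda_1$ into $G_T$ requires $\lambda_3$ and $\lambda_5$ to be nonzero, which fails exactly when $n+1$ equals $3$ or $5$. To confirm that $n=2$ and $n=4$ are genuine exceptions rather than shortcomings of the method, the remaining step is to compute the orders of the (tiny) matrices $A$ and $T$ directly in these two cases and verify that the two cycle lengths really do differ.
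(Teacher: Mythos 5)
Your proposal has a genuine gap: you identify the cycle length with the least common multiple of the orders of the (nonzero) eigenvalues, i.e.\ with the exponent of $G_A$ or $G_T$. That is only the \emph{odd part} of the cycle length. The matrices $A$ and $T$ are not in general diagonalizable over the field in question: writing $n+1=2^ab$ with $b$ odd, the nonzero eigenvalues sit in Jordan blocks of size up to $2^a$, and a Jordan block of size $m>1$ with nonzero eigenvalue $\lambda$ has period $2^{\lceil\log_2 m\rceil}\cdot\mathrm{ord}(\lambda)$, not $\mathrm{ord}(\lambda)$. So the cycle length is $2^s t$ where $t$ is the lcm of the orders of the eigenvalues but $s$ is governed by the size of the largest Jordan block attached to a nonzero eigenvalue. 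The opening of this subsection says that matching the lcm of the eigenvalue orders is something we \emph{need}, not that it suffices; your reading turns a necessary condition into a sufficient one. The paper closes this gap by invoking two lemmas showing that for every nonzero eigenvalue the largest Jordan block of $A$ and of $T$ has the same size $2^a$, so that $s=a$ in both dimensions; without some such argument your proof establishes only that the odd parts of $CL(\Phi)$ and $CL(\sigma)$ agree. (Your ``easy half'' divisibility $CL(\sigma)\mid CL(\Phi)$ via subgroup exponents suffers from the same omission: containment of eigenvalue groups says nothing about the relative sizes of the Jordan blocks.)

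A second, smaller omission: the degenerate case in which \emph{all} eigenvalues are zero, which occurs when $n+1=2^a$. There the groups $G_A$ and $G_T$ are trivial and ``lcm of orders of eigenvalues'' is vacuous; the cycle length is $1$ in both dimensions because the nilpotent Jordan blocks die after finitely many steps. This case is consistent with the theorem but falls outside your framework and has to be treated separately, as the paper does. Your suggestion to verify $n=2$ and $n=4$ by direct computation of the small matrices is sound and matches the spirit of how the paper isolates those exceptions.
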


\begin{proof}
As we have mentioned throughout this paper, the cycle lengths of $\sigma_1$ and $\sigma_2$ are the lcm of the periods of the Jordan blocks of $A$ and $T$.  The periods of Jordan blocks are 1 when the eigenvalue is zero and are given by Theorem \ref{thm:jb_period_1} when it is not.

Let the cycle length of $T_n$ be $2^st$, and let the cycle lengths of the Jordan blocks be $2^{s_i}t_i$ for the $i^\mathrm{th}$ Jordan block. Then $2^s=\mathrm{max}_i2^{s_i}$ and $t=\mathrm{lcm}_i t_i$.
Similar notation can be used to describe how the cycle length for $A_n$ can be found in terms of the cycle lengths of its Jordan blocks.

If $n+1$ is a power of 2 then all the eigenvalues of $A_n$ and $T_n$ are zero by Theorem \ref{thm:lightsoutalways}, so eventually all sites become 0 and the cycle length is 1.

Theorem \ref{thm:eig_lcm_equal} showed that the least common multiple of the orders of the eigenvalues in two and one dimensions are the same except for $n=2$ and 4, so $t$ is the same in both dimensions.  

To show that $2^s$ is the same in both dimensions, notice that 
there \emph{are} nonzero eigenvalues in two dimensions if $n+1$ is not a power of 2 and $n+1\neq3,5$. (In these cases, there are at least two distinct eigenvalues of $A_n$, so their sum is an example of a nonzero eigenvalue of $T_n$.)
For any nonzero eigenvalue of $T$, the largest Jordan block of $T$ has the size $2^{a+1}$ (by Theorem \ref{thm:T_jordan_block}), so $2^s=2^{a+1}$ by Theorem \ref{thm:jb_period_1}, as in one dimension. So both $2^s$ and $t$ are the same in one and two dimensions.

Now consider the edge cases $n = 2, 4$. When $n=2$, we have $CL[T_2]=2$, $CL[A_2]=1$. When $n=4$, $CL[T_4]=6$, $CL[A_4]=2$. This can be seen by watching the evolution of the automata (one has to check just initial conditions where one light is on, see Lemma \ref{lma:greenfunction}).  The cycle lengths can also be found using the Jordan normal form, using identities for the cube roots and fifth roots of 1 to calculate the orders of the eigenvalues. Thus the cycle lengths of $\sigma_1$ and $\sigma_2$ are always the same except when $n=2$ or $4$.
\end{proof}

\section{Geometric Explanations for the Dependence of the Cycle Length on the Dimension}
\label{sec:geom_results}

The result that the cycle lengths of $\sigma_1(n)$ and $\sigma_2(n)$ are equal can be understood partly by more intuitive geometrical arguments. We will first give the details of the arguments that prove the result discussed in Section~\ref{sec:conj_rel_CL}, which shows that $\sigma_1$ and $\sigma_2$ have cycle lengths with a \emph{similar} form. This is deduced from a general result about cycle lengths: the cycle length does not increase much with the dimension, as
long as one considers automata that are homogeneous, linear, and on a lattice with the same number of sites in each dimension. This is
shown in Theorem 4.7 of \cite{wolfram},
\begin{thm}[Martin et al. Theorem 4.7]
\label{thm:Martin}
Consider a linear, homogeneous automaton with periodic boundary conditions, on an $n\times n\times\dots\times n$ lattice. Let us also assume that the rule for the evolution of a site has reflection symmetry\footnote{If the reflection symmetry is not assumed, there is still an upper bound that depends only on $n$ and not on the dimension or the rule of automaton, $2^a(2^{\mathrm{ord}_2(b)}-1)$\cite{wolfram}.} in the $d$ planes through it.
If $n=2^ab$ where $b$ is odd, the cycle length of the automaton is a divisor of  $(2^{\mathrm{sord}_2(b)}-1)2^a$.
\end{thm}
 This upper bound is always less than $2^n$. This is not surprising in one dimension, since the maximum number of configurations that can occur in one dimension is $2^n$, but it is surprising in higher dimensions.This result does not immediately apply to the $\sigma$ automata, because they do not have periodic boundary conditions. This could allow the cycle length to grow much more with the dimension\footnote{The reason for the theorem is that for a homogeneous automaton in one dimension, the eigenvectors found by \cite{guanHe} are of the form $(1,\beta,\beta^2,\dots,\beta^{b-1})$ if $a=0$, where $\beta^b=1$ (no matter what the rule for the evolution of the automaton is). One can calculate the eigenvalue and see that it is in the field containing $\beta$, so its order is a divisor of $2^j-1$ where $2^j$ is the size of the field, and $j=\mathrm{ord}_2(b)$. There is a similar formula in higher dimensions. These are not eigenvectors if there is a boundary}. We will now discuss why this does not happen.

Let us define $\bar{\sigma}_1(n)$ and $\bar{\sigma}_2(n)$ to
be $n \times 1$ and $n\times n$ boards with periodic boundary conditions, i.e., where the opposite sides are considered to be connected to one another, with the $\sigma$ evolution rule. They can also be pictured as infinite automata in 1 or 2 dimensions where the configurations always repeat every $n$ steps along each axis. 
The cycle length for the $\sigma_1$ and $\sigma_2$ automata with boundaries can be found by relating them to cycle lengths on these periodic boards. Martin et al. \cite{wolfram} showed how to do this for a one-dimensional board and it is easy to generalize it to two dimensions. This is done by relating the automaton on a finite board with boundaries to an infinite automaton, using the ``method of images" \cite{images} (used also to find electric fields within cavities in electromagnetism \cite{jackson}). For example, consider a $\sigma_2$ automaton whose sites can be regarded as points with coordinates $x$ and $y$ that are integers between 1 and $n$.  Consider this as part of an infinite grid at all points labeled by the integers. For any configuration we can extend it to an infinite configuration by adding mirrors at $x=0,\ n+1$ and $y=0,\ n+1$ and reflecting them infinitely many times as a kaleidoscope would. That is, let $f(x,y)=0,1$ be the states of the points of the original grid. This function will be extended to $f_\mathrm{kaleidoscope}$ by defining
\begin{align}
    &f_\mathrm{kaleidoscope}(x,y)=f(x,y)\text{\ if\ }1\leq x,y\leq n\nonumber\\
    &f_\mathrm{kaleidoscope}(x,y)=0\text{\ if\ }x\text{\ or\ }y\text{\ is\ a\ multiple\ of\ }n+1\nonumber\\
    &f_\mathrm{kaleidoscope}(x,y)=f([[x]],[[y]])\text{\ in\ other\ cases}.
\end{align}
The second case says that on the mirrors, the configuration is defined to be 0. The third case describes how the pattern is reflected to points outside of the original grid. It is written in terms of
a function $[[z]]$ of an integer $z$: Let $r$ be the remainder of $z$ when $z$ is divided by $2(n+1)$. If $0\leq r\leq n+1$ then $[[z]]=r$, and if $n+1<r\leq 2n+1$ then $[[z]]=2n+2-r$. 
For example, for a $5\times 5$ grid with one light on at the lower left corner, $(1,1)$, the kaleidoscope configuration has 4 lights on at $(1,1),\ (1,-1),\ (-1,-1),\ $and $(-1,1)$, because the original light and its images in the two sides that make the corner. It also has lights at the translations of these 4 lights in any direction by a multiple of 12.

One can similarly define the reflected pattern for $\sigma_1$ by reflecting in the edges at $x=0$ and $n+1$: $f_\mathrm{kaleidoscope}(x)=f([[x]])$ if $x$ is not a multiple of $2n+2$ and $f(x)=0$ if it is. Then
\begin{thm}
Consider a configuration in $\sigma_1(n)$ or $\sigma_2(n)$. The way it evolves is the same as the way its reflected pattern evolves, if one considers only the part of these patterns that are between 0 and $n+1$.
\label{thm:reflection}
\end{thm}
\begin{proof}
Consider $\sigma_1[n]$ first.
At the first step, the extended state $f_\mathrm{kaleidoscope}$ has the two symmetries $x \leftrightarrow - x$ and $x \leftrightarrow 2n + 2 - x$, so it
will always have these symmetries. This implies that at any step after the first step, the states of the sites at $x=0$ and $n+1$ will be 0, because at the previous step the lights to the left and the right of them will be the same. These sites will also be in the state 0 at the first step because they start that way. 
Thus the cells at 0 and $n+1$ are always off, so the original board between 1 and $n$ evolves the same way it would evolve if it just ended at $x=1$ and $n$. 
The same argument can be used for $\sigma_2(n)$, as the rows of sites along mirrors stay in the state 0 for the same reason.
\end{proof}

From this it follows that
\begin{thm}
    The cycle length of $\sigma_1(n)$ is a divisor of $CL[\bar\sigma_1(2n+2)]$ and the cycle length of $\sigma_2(n)$ is a divisor of $CL[\bar\sigma_2(2n+2)]$. 
    \label{thm:boundaryconditions}
\end{thm}
\begin{proof}
    Consider a configuration $f(x)$ for $\sigma_1(n)$. By Theorem \ref{thm:reflection}, the cycle length of $f$ is the same as the cycle length of $f_\mathrm{kaleidoscope}$ on $\bar\sigma_1(2n+2)$. Therefore, the lcm of cycle lengths of configurations of $\sigma_1$ is a divisor of the lcm of cycle lengths of configurations of $\bar\sigma_1$; i.e., $CL[\sigma_1(n)]|CL[\bar\sigma_1(2n+2)]$. The cycle lengths might not be equal because a configuration of $\bar\sigma_1(2n+2)$ that is not made of a kaleidoscopic repetition of configurations of $\sigma_1$ could have longer cycle lengths. (Appendix \ref{sec:geo_proofs} shows that this does not happen for $\sigma_1$.)

    The same proof applies to the two-dimensional automata.
\end{proof}
Combining this theorem and Theorem~\ref{thm:Martin} gives the proof that
the cycle lengths of $\sigma_1(n)$ and $\sigma_2(n)$ are of the form $2^{a+1}(2^{\mathrm{sord}_2(b)}-1)/q$, but possibly for different $q$'s. (The one dimensional case of this is Theorem~\ref{thm:rdm_sord}.)  For other automata, this type of upper bound applies only when there are periodic boundary conditions. When there are edges instead of periodic boundary conditions, the reflection principle usually does not work, and it may be that in two dimensions the cycle length is of order $2^{n^2}$.

The fact that the $q$'s are the same for $\sigma_1(n)$ and $\sigma_2(n)$ if $n\neq 2,4$ can be understood, at least partly, by another geometric argument.
One can begin with the following lemma (similar to Lemma 3.4 of \cite{wolfram}).
\begin{lma}
    The cycle length of a linear $\mathbb{Z}_2$ automaton with $k$ 1's is a divisor of the lcm of the cycle lengths of the $k$ constituent configurations each with a single 1.
    \label{lma:greenfunction}
\end{lma}
\begin{proof}
Since the automaton is linear, one can study the evolution of a configuration with $k$ 1's by considering a sum (modulo 2) of iterates of $k$ configurations with a single 1 (in analogy with Green's functions for differential equations). If the cycle lengths of these constituent configurations are $c_1,\dots,c_k$, the full configuration will repeat every $\mathrm{lcm}(c_1,\dots,c_k)$ steps, so its cycle length is a divisor of this quantity.
\end{proof}

For $\sigma_2(n)$, the evolution of a state with a single 1 in two dimensions can be related to the evolution of a one dimensional system (see Appendix \ref{sec:geo_proofs}). It follows that that $CL[\sigma_2(n)]$ is a divisor of $2CL[\sigma_1(n)]$.
    
\section{Cycle Lengths in Higher Dimensions}
\label{sec:CL_highd}
Consider an automaton described by a $d$-dimensional lattice of cells that evolves according to a generalized version of the $\sigma_1$ and $\sigma_2$ rules: each cell goes to state 0 or 1 if an even number or an odd number, respectively, of nearest-neighbor cells were in state $1$ at the previous time-step. This automaton $\sigma_d(n)$ is described by the Kronecker sum of $d$ copies of $A_n$, where $d$ is the dimension. The reflection principle arguments from the previous section show that the cycle length has the form of $\frac{1}q2^a(2^{\mathrm{sord_2}(b)}-1)$, so the cycle length does not grow indefinitely as a function of $n$. We will show that (except when $n=2$ or 4), $q$ is equal to 1 when $n$ is large enough.

It is interesting to first give an alternative argument why the cycle length cannot grow indefinitely as $n$ increases, as a warm-up for the proof. Let the cycle length of $\sigma_d(n)$ be $2^{s}t$ where $t$ is odd. Let us consider only the case where
$n$ is even, to illustrate the reason the cycle lengths stop growing as $d$ increases. Similar arguments can be used when $n$ is odd. Then the largest Jordan blocks for the $d$-dimensional automaton are $2\times2$ blocks by the same reasoning used in Theorem \ref{thm:eigen_sum} (in fact, they are all $2\times 2$ blocks). Therefore $s=1$. 

The other factor $t$ of the cycle length is the lcm of the orders of the nonzero eigenvalues.
The eigenvalues of the Kronecker sum of $d$ copies of $A_n$ are sums of the eigenvalues of $A_n$. Hence the eigenvalues remain in the splitting field of $A_n$ found in Section~\ref{sec:phi_fields}.  If this field has $2^j$ elements, then $t$ is a divisor of $2^j-1$. Table~\ref{tab:cycle_table} shows that, at least for small values of $n$, the cycle length is often already equal to $2(2^j-1)$ in one dimension, but sometimes it is smaller. If the one-dimensional cycle length is smaller than $2 (2^j-1)$, the two-dimensional cycle length is identical to the one-dimensional cycle length.  We will show that in higher dimensions the cycle length will eventually saturate to the maximum possible value, $2 (2^j-1)$, except when $n=2$ or 4.

\begin{thm}
If $n$ is even and is not 2 or 4, and if $d\geq\frac n2$ the cycle length of $\sigma_d(n)$ is $2(2^{\mathrm{sord}_2(n+1)}-1)$.\label{thm:dimensions}
\end{thm}

Before proving this, recall that the splitting field is the smallest field containing elements of the form $\alpha^i+\alpha^{-i}$ where $\alpha$ is an $(n+1)^\mathrm{th}$ root of 1, and that this field has $2^j$ elements where $j=\mathrm{sord}_2(n+1)$. A simpler description of this field is that it is the set of \emph{linear} combinations of elements of the form $\lambda_i=\alpha^i+\alpha^{-i}$. To justify this description, start by showing that the set of linear combinations of this form is a field. It is closed under multiplication because $\lambda_i\lambda_j=\lambda_{i+j}+\lambda_{i-j}$. Thus, the product of any two sums of $\lambda_i$'s is also a sum of $\lambda_i$'s. 

Next we show that the set contains 1 and is closed under inverses using the fact that all the elements in this set have a finite order (as in Theorem 5.1.2 in~\cite{Herstein}). The element $1$ belongs to this set because all powers of $\alpha+\alpha^{-1}$ are in the set. So 1 is in the set since it is a power of $\alpha+\alpha^{-1}$. Closure under reciprocation follows from the fact that if $x$ is any element of the set then  $x^{-1}=x^{j-1}$ where $j$ is the order of $x$. Thus, these linear combinations form a field containing all the eigenvalues, and they are the smallest field containing the eigenvalues since any field containing the eigenvalues must contain their linear combinations also.

\begin{proof}[Proof of Theorem \ref{thm:dimensions}]
The eigenvalues of the automaton's matrix in $d$ dimensions are the sums of all combinations of $d$ eigenvalues of $A_n$. The eigenvalues also include sums of combinations of $d-2$, $d-4$, etc. eigenvalues of $A$, because a sum of $d-2,d-4$ etc. eigenvalues can also be written as a sum of $d$ eigenvalues (by adding any  eigenvalue to them 2,4,... times, respectively).

This implies that when $d$ is greater than or equal to the number of eigenvalues of $A_n$, that is, $\frac n2$, and is even (odd), the eigenvalues are any elements of the field that can be represented as a sum of any combination of an even (odd) number of eigenvalues of $A_n$.

Let us assume $d\geq\frac n2$.
To find the lcm of the orders of the eigenvalues in $d$ dimensions, we will find the smallest group $G_d$ when $d\geq\frac n2$ containing all the nonzero eigenvalues, which is the smallest multiplicatively closed set containing linear combinations of an even or odd number of eigenvalues if $d$ is even or odd respectively.

Assuming $n\geq6$, the sums of an even number of eigenvalues of $A_n$ will now be written as ratios of sums of odd numbers of eigenvalues of $A_n$ and vice versa.  This implies that $G_d$ contains the nonzero elements of the field that can be expressed as a sum of either an even or an odd number of eigenvalues of $A_n$. These sums give all the nonzero elements of the splitting field (as shown above), so the cycle length is $2(2^j-1)$.

First, we consider multiplication of the elements of the splitting field, in order to be able to understand division in the next step of the proof. Let $\sigma$ be a sum of $p$ eigenvalues of $A$ and let $\tau$ be a sum of $q$ eigenvalues of $A_n$. Suppose there are $r$ eigenvalues in common between the sums. Multiply out $\sigma\tau$. This gives a sum of $l=2pq-r$ eigenvalues, because $\lambda_i\lambda_j$ is the sum of two eigenvalues if $i\neq j$ and is one eigenvalue if $i=j$, as we found in the proof that the set of sums of eigenvalues is a field. That is, the parity of the number of the number of terms in $\sigma\tau$ is the number of common eigenvalues.

Let $\sigma$ be a given sum of some number of eigenvalues. To find a way of writing $\sigma$ as a ratio of two other sums of eigenvalues, let $\tau$ be a sum of eigenvalues, multiply $\sigma$ by $\tau$ and then divide by $\tau$ again: $\sigma=\frac{\mu}{\tau}$ where $\mu=\sigma\tau$. The parity of the number of terms in $\tau$ and $\mu$ can be chosen arbitrarily since one can decide how many terms $\tau$ and $\sigma$ have in common.  The table shows what $\tau$ should be in order to represent a $\sigma$ that is a sum of an odd (even) number of eigenvalues as the ratio of two sums of an even (odd) number of eigenvalues.  There are different cases depending on the number of terms in $\sigma$, which is called $p$.
\begin{center}
\begin{tabular}{|c|c|c|}
 $p$ & How to choose terms in $\tau$ & $l$\\
 \hline
 odd, $p\geq 3$ & two eigenvalues of $A_n$ that are terms of $\sigma$ & $4p-2$\\
 $p=1$ & two eigenvalues of $A_n$ not in $\sigma$ & 4\\
 even & one eigenvalue of $A_n$ that is in $\sigma$ & $2p-1$\\
 \hline
\end{tabular}
\end{center}
For each case, the numbers of terms in $\mu$ (i.e., $l$) and in $\tau$ have the opposite parity from the number of terms in $\sigma$, as required. The rule cannot be applied if $A_n$ has fewer than 3 eigenvalues (the $p=1$ case is not possible then), so the argument does not apply if if $n=2$ or 4.

\end{proof}

When $n=2$ or $4$ the
cycle length alternates in even and odd dimensions. 
For $n=2$ it alternates between $2\leftrightarrow 1$ and for $n=4$ it alternates between $6\leftrightarrow 2$.  (One can find the eigenvalues for these cases and deduce the orders using the identities $\alpha^2+\alpha+1=0$ for a cube root of 1 and $\beta^4+\beta^3+\beta^2+\beta+1=0$ for a fifth root of 1.) 

\section{Discussion}
\label{sec:discussion}

In this paper, we analyzed the adjacency matrices of $\sigma_1(n)$ and $\sigma_2(n)$ binary cellular automata, $A_n$ and $T_n$ respectively. These adjacency matrices are related by $T_n = A_n \otimes I_n + I_n \otimes A_n$, where the eigenvalues of $T_n$ are sums of eigenvalues of $A_n$. Over a field of characteristic 2, we proved that sums of eigenvalues of $A_n$ are equal to products of eigenvalues of $A_n$, so the lcm of periods of eigenvalues of $A_n$ and $T_n$ are identical. Moreover, we showed that the size of the largest Jordan block of $A_n$ and $T_n$ are equal over an extension of GF(2). 


As an automaton's dimension grows, the number of possible configurations an automaton can exist within increases, making the equality $CL[\sigma_1(n)] = CL[\sigma_2(n)]$ especially surprising. As previously mentioned, the similarity in sizes between cycle lengths of $\sigma_1$ and $\sigma_2$ can be attributed to the linearity and translation-invariance of the 2 automata, as the cycle length of $\sigma_d(n)$ is bounded from above by $2^n$ \cite{wolfram}. We proved that as the dimension $d$ increases, the cycle length eventually saturates the bound $2^{\mathrm{sord}_2(n + 1)} - 1$ (for $n$ even). It would be interesting to study how the cycle lengths of automata change when the bound $2^{\mathrm{sord}_2(n + 1)} - 1$ no longer applies: this occurs when the automaton cannot be mapped to any model with periodic boundary conditions. For a one-dimensional automaton with an evolution matrix $C$, define an evolution matrix of a d-dimensional automaton as the Kronecker sum of $d$ matrices equal to $C$. The cycle length will remain bounded\footnote{Taking the Kronecker sum of transition matrices does not alter the working field, so the least common multiple of the orders of the eigenvalues remains finite.}. However, the bound on the cycle length of automata with null boundary conditions might be much larger than that of automata with periodic boundary conditions. It is unclear whether cycle lengths of such automata might even increase as $2^{n^d}$ as one might naively expect, until this becomes bigger than the upper bound.

Our results for $\sigma_d(n)$ cellular automata assumed a d-dimensional square lattice and a nearest-neighbor rule of evolution. It is then quite natural to consider whether or not the properties of the cycle lengths of $\sigma_d(n)$ that we observed hold when slightly altering the microscopic realizations of these automata. Such alterations might including the addition of next-nearest-neighbor interactions or realizing the $\sigma_d(n)$ automaton on a distinct (not necessarily bipartite) lattice. In essence, we wonder if the scaling of cycle lengths with automaton size and dimension possesses the notion of \textit{universality} seen in physical systems. There have been attempts to categorize cellular automata into classes based on the computational complexity of particular finite-time orbits \cite{Wolfram1984} or through measure-theoretic quantities characterizing the complexity of ensembles of automata configurations \cite{Grassberger1986}. However, a classification of cellular automata based on the scaling of their periodic orbits has yet to be studied.

\bibliographystyle{apalike}
\bibliography{references}

\appendix
\makeatletter
\renewcommand\thesection{\Alph{section}}
\makeatother

\section{Configurations with maximal cycle length}
\label{app:max_CL_state}

The cycle length of an automaton is the lcm of the cycle lengths of all configurations. We prove that there always exists a configuration whose cycle length is equal to the automaton cycle length, so it is the \textit{maximal cycle length}. Consider the Jordan form of the automaton transition matrix. If there exists only one Jordan block, the configuration $(1,0,0,\dots,0)^T$ can be shown to saturate the maximal cycle length. If the transition matrix possesses multiple Jordan blocks, consider a configuration with a 1 in the 1\textsuperscript{st} coordinate of every set of entries corresponding to a Jordan block\footnote{For example, if the transition matrix has 2 blocks of size 2 and 3 respectively, such a configuration would be $(1, 0, 1, 0, 0)^T$.}. The cycle length of these configurations is the lcm of the cycle lengths of the individual Jordan blocks, as each block evolves independently. However, such a construction is not necessarily the desired automaton configuration, as the state entries (after rotating out of the Jordan basis) are in the splitting field $\mathbb{F}$ rather than $\mathbb{Z}_2$.

Another useful standard form for a matrix is the ``rational canonical form'' (see \cite{Herstein} Chapter 6 and \cite{vanderWaerden} vol. 2 Section 111). Any matrix can be written in the form $X=PMP^{-1}$ where $P,M$ have entries in the same field as $X$, and $M$ is a matrix of blocks of companion matrices\footnote{Companion matrices are matrices that are zero except on the diagonal one step below the main diagonal and in the last column. One step below the diagonal all the entries are equal to 1, and the last column can have arbitrary entries in the same field as $X$.} instead of Jordan blocks. 

For a companion matrix $C$, the cycle length of the configuration $\v{e_1}=(1,0,0,\dots,0)^T$ equals the cycle length of $C$, and thus the lcm of all cycle lengths: Since the unit vectors $\v{e_j}$ can be expressed as iterates of $\v{e_1}$, their cycle lengths are all divisors of $CL[\v{e_1}]$. Because any vector can be expressed as a linear combination of $\{\v{e_j}\}$, the cycle length of any configuration divides $CL[\v{e_1}]$, so $CL[\v{e_1}]=CL[C]$.

Suppose $M$ is comprised of blocks of companion matrices, and
let $\v{v}$ be 1 in entries corresponding to the first row of each block in $M$ and 0 elsewhere. Then each block evolves independently, so $CL[\v{v}]$ equals the lcm of the cycle lengths of the companion matrices. Thus $CL[\v{v}] = CL[M]$, so we are able to construct a state with the same cycle length as the automaton.

\section{Geometric analysis of automata cycle lengths}
\label{sec:geo_proofs}

In this section, we will continue the arguments introduced in Section \ref{sec:geom_results} to show that  $CL[\sigma_2(n)]$ divides $2CL[\sigma_1(n)]$ by relating certain $\sigma_2$ configurations to $\sigma_1$ configurations. This gives a geometrical interpretation of the arguments in this paper. However, without using Jordan forms, we do not know how to prove the full result that $CL[\sigma_2(n)]=CL[\sigma_1(n)]$ (except when $n=2$ or $4$). 

\subsection{Green's functions}

As stated in Lemma \ref{lma:greenfunction}, one can study the evolution of a configuration with $k$ 1's by considering a sum (modulo 2) of iterates of $k$ configurations with a single 1 (in analogy with Green's functions for differential equations). This lemma implies that to find the cycle length of the automaton, one does not need to find the cycle lengths of all configurations, but only of configurations which initially have one 1, such as is shown in Fig. \ref{fig:grid}. The cycle length of the automaton is the lcm of all these configurations' cycle lengths. We will now relate the evolution of a single one in $\sigma_2$ and a single one in $\sigma_1$.

\begin{figure}[H]
    \centering
\subfloat[a][]{\includegraphics[scale=0.32]{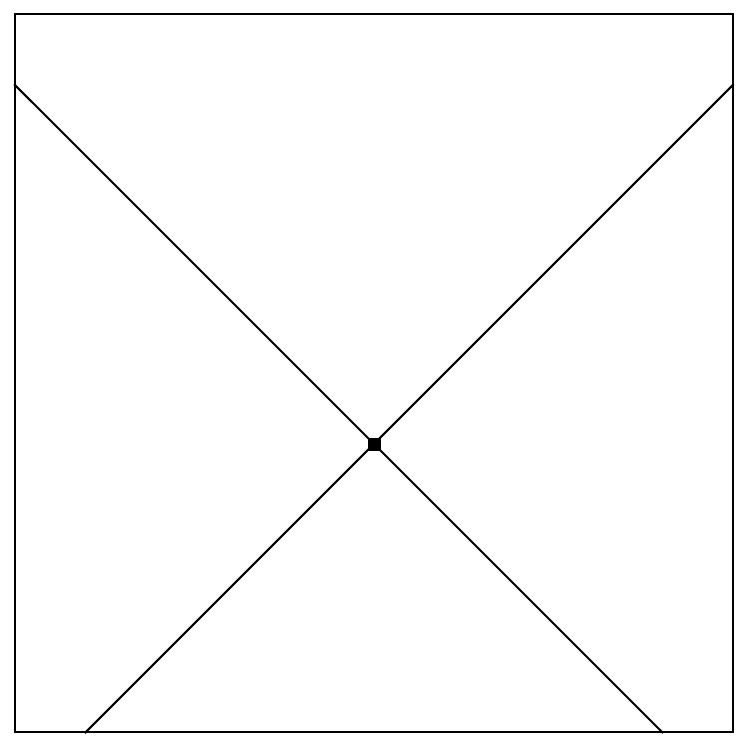}}\subfloat[b][]{\includegraphics[scale = 0.32]{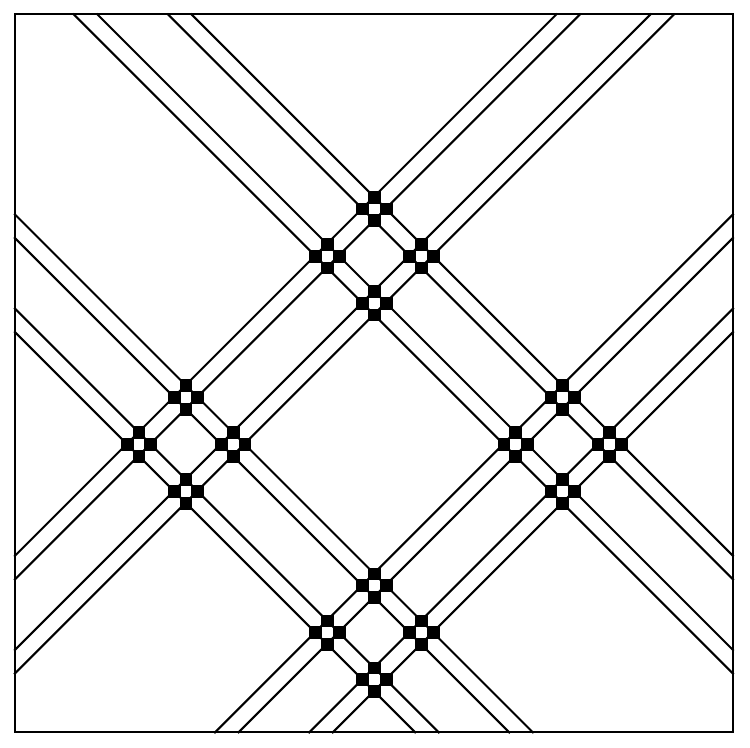}}
\subfloat[c][]{\includegraphics[scale=0.32]{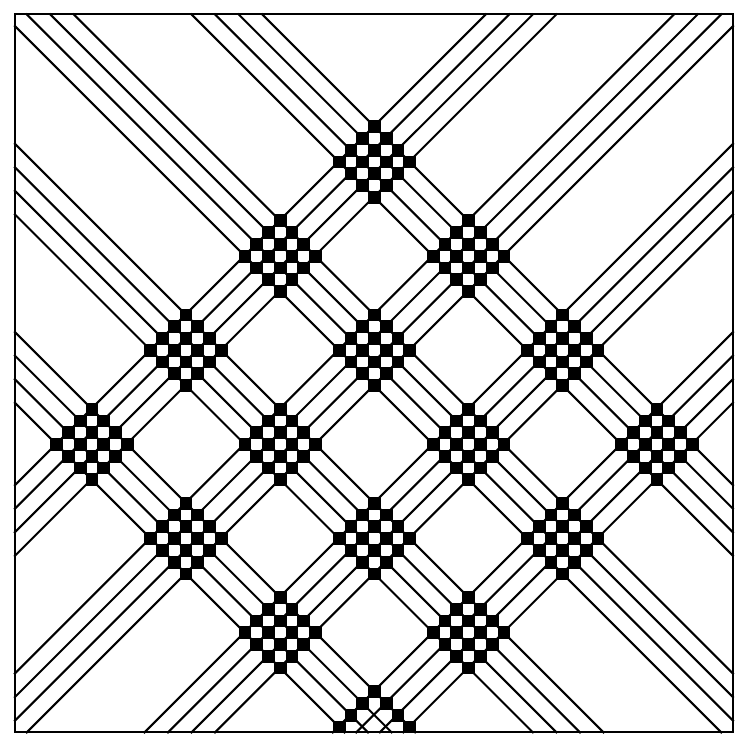}}\subfloat[d][]{\includegraphics[scale=0.32]{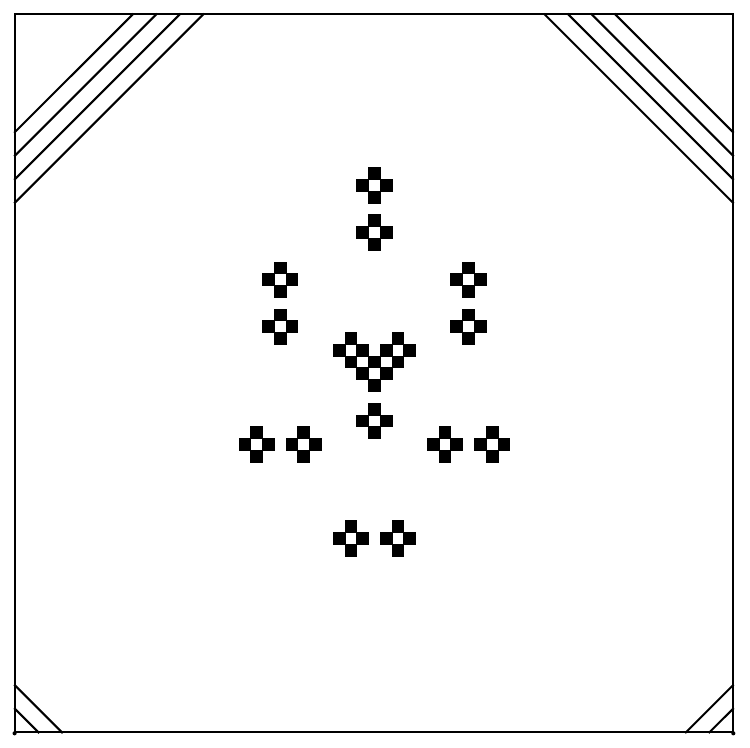}}
    \caption{$\sigma_2$ configurations with frameworks. Some patterns are shown that evolve from a starting point with a single 1 on a 63 $\times$ 63 board, shown in (a).  (b) is the 21st step, showing a pattern. This pattern can be represented by a framework, which are parallel lines that have 1's at their intersections. The framework lines evolve according to $\sigma_1$ rules on an infinite grid. Since the 1's have not yet reached the boundary, the grid is effectively infinite. (c) is the 27th step and shows that once the pattern reaches the boundary (the bottom side), it reflects, so the 1's do not stay on the framework any more.  The patterns become much more interesting after the reflections have occurred many times, as shown in (d) (the 73rd step). The 1 in the first picture has $(x,y)=(31,25)$.}
    \label{fig:grid}
\end{figure}

\subsection{Frameworks}

First consider infinite grids, as their lack of boundaries makes them simpler to understand than finite grids. Moreover, finite grids can be related to infinite grids (as done in the next section). Some special states of the two dimensional automata have a simple form, representable by a ``framework":
\begin{dfn}
    A \textit{framework} is two sets of lines parallel to $x=\pm y$ respectively, representing a $\sigma_2$ configuration such that a pairwise intersection of lines denotes a 1.
\end{dfn}
The patterns that begin with a single 1 can all be represented by frameworks, as in Fig. \ref{fig:grid}(a,b). We will prove this by showing that if an initial configuration can be represented by a framework, then the states it evolves into also can be. Besides this, the locations of the framework lines can be predicted by using the $\sigma_1$ rule. 

Let us define two vectors to represent the positions of the framework lines. For the lines with a slope of $+1$, let $u(j)=1$ if there is a line with the $x$-intercept at $j$. Let $v(j)=1$ for lines with a slope of $-1$ with the $x$-intercept at $j$. Otherwise, $u(j)$ and $v(j)$ are both 0. The integers where $u$ and $v$ are 1 must all be even or must all be odd: The site at the intersection of two framework lines can be found as follows: For a $j$ where $u=1$ and a $k$ where $v=1$ the framework lines are $x-y=j$ and $x+y=k$ respectively, so their intersection is $x=\frac12(k+j),y=\frac12(k-j)$. This is not a site of the automaton if $k$ and $j$ have different parities, so all intercepts of lines with slope $+1$ must have the same parity as all intercepts of lines with slope $-1$, so they must all have the same parity as one another. 

The following theorem describes how configurations with frameworks evolve:
\begin{thm}
\label{diagonals}
Let $u(j)$ and $d(j)$ be two sequences of 0's and 1's (that are equal to 1 only at even $j$'s or only at odd $j$'s).  Consider the state of an infinite $\sigma_2$ automaton whose state is defined by the framework of lines defined by $u$ and $v$, i.e., with 1's at $(x,y)$ for each $x,y$ such that $u(x-y)=1$ and $d(x+y)=1$. After $t$ evolutions under the $\sigma_1$ rule, the configuration is formed in the identical way from patterns that evolve after $t$ steps of applying the $\sigma_1$ rule to $u$ and $d$.
\end{thm}
\begin{proof}
Let us show that each step of the evolution of the given state of the infinite $\sigma_2$ can be predicted by evolving $u$ and $v$ according to the $\sigma_1$ rule. Consider all possibilities for the neighbors of a cell at $(x,y)$.  According to the $\sigma_2$ rule, its next state will be determined by its neighbors at $(x\pm1,y)$ and $(x,y\pm1)$. All the possibilities for the states of the neighbors in a pattern represented by a framework are shown in Fig. \ref{fig:quilts}.  
The number of neighboring cells in these patterns that are on is 1,4,2 or 0. Thus, the only case where the cell at $(x,y)$ will be on at the next step is for the first possibility.

The rule for the evolution of the framework lines is supposed to be the $\sigma_1$ rule, which says that $u(j)$ equals 1 at one step if $u(j-1)+u(j+1)=1$ at the step before. A more geometrical description is that a line turns on if one of its two parallel neighbors is on. In the first case, the diagonal of slope +1 through the center cell has one neighboring diagonal that is on (the one below it), so it will be on at  the next step. The diagonal of slope -1 through the center will also be on. So in the case, if the two frameworks evolve by the $\sigma_1$ rule, they will intersect at the center square, agreeing with the $\sigma_2$ rule which also says it should be on.

For the other three cases, the center square will not be on, and the two sets of framework lines will also next intersect at it.
\end{proof}
\begin{figure}[H]
    \centering
    \includegraphics[scale = 0.5]{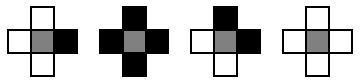}
    \caption{The states for the neighbors of a cell that can be derived from grids of lines. The first and third can also be rotated. The state of the center cell is not shown, since the state of this cell at the next step does not depend on it.}
    \label{fig:quilts}
\end{figure}

This allows one to relate the evolution of a state with one site on in one and two dimensions: If initially one site is on, this is represented by a framework of the two diagonal lines through it, so its evolution will also be represented by frameworks.
Say that initially the site $(0,0)$ is the only site that is on. If $g(j,t)$ describes how a one-dimensional state where initially just the site at 0 is on (that is, $g(j,t)$ is the state of the $j^\mathrm{th}$ site after $t$ steps). Then in the $\sigma_2$ automaton $(0,0)$ evolves into the state described by $G(x,y,t)=g(x-y,t)g(x+y,t)$ after $t$ steps. 

In order to use this expression for $G(x,y,t)$ for finite automata, the reflection principle, Theorem \ref{thm:reflection} is helpful. 

\subsection{Reflection Principle and Green's Functions}
\label{sec:reflection}

The result about frameworks seems to predict the evolution of one site on a \emph{finite} board only for a small number of steps, until the pattern formed by the sites that are on reaches the boundary of the board: Fig. \ref{fig:grid}c and d. Afterward, the configuration does not seem to be related to a framework. However, with the help of the reflection principle, this configuration can be represented as the \emph{sum} modulo 2 of 8 different configurations which \emph{are} described by frameworks.

By theorem \ref{thm:reflection}, the evolution of the state of the finite board where one site at $(i,j)$
is on can be found by defining a configuration on an infinite grid, $f_\mathrm{kaleidoscope}(x,y)=1$ if $(x,y)\equiv(i,j),(-i,j),(i,-j),$ or $(-i,-j)\mod 2(n+1)$ and finding the evolution of this configuration. Any of the sites that is on, say $(i_0,j_0)$ evolves into $G(x-i_0,y-j_0,t)$. The whole configuration is the superposition of these over all the images of $(i,j)$, that is
\begin{equation}
     G_{n,(i,j)}(x,y,t)=\sum_{l,m,s_1,s_2}G(x-s_1i-2l(n+1),y-s_2j-2m(n+1),t),
\end{equation}
where $G_{n,(i,j)}$ represents the evolution of the configuration starting with one site $(i,j)$ on an $n\times n$ board.
The sum is over $s_1,s_2=\pm 1$ and $-\infty\leq l,m\leq\infty$, corresponding to the images of the starting site.
Only finitely many of the configurations evolving from the image sites reach a given site of the board at any time, so the sum is finite. 

This formula does not immediately help to predict the cycle length, since more and more of the sites outside the board can influence the board the longer one waits. The method of images can also be used in a somewhat different way to give a more useful formula. Consider first the initial state, consisting of all images of the site $(i,j)$,
which may be represented as $\sum_{s_1,s_2}L_2(x-s_1i,y-s_2j)$, where
$L_2(x,y)=1$ if $x\equiv y\equiv 0\mod 2n+2$ and is zero otherwise; each of the four terms describes a lattice configuration formed by translating $(\pm i,\pm j)$ by $(2n+2,0)$ and $(0,2n+2)$ any number of times. Each of these can be divided into two checkerboard patterns. One is made by translating $(\pm i,\pm j)$ by $(2n+2,2n+2)$ and $(2n+2,-2n-2)$ and the other consists of the remaining sites. 

Each of these checkerboard patterns can be represented by a framework where each set of lines of slope $+1$ or $-1$ has $x$-intercepts that are spaced by $4n+4$.  So by theorem \ref{diagonals}, the evolution of each set of framework lines is described by a Green's function of a one-dimensional $\sigma$ with periodic boundary conditions with a period of $4n+4$.
 The Green's function of the eight-dimensional pattern is thus a sum of 8 terms, corresponding to the evolution of the 8 checkerboards. This must apply to Fig. \ref{fig:grid}d, surprisingly!
This can be written in the form:
\begin{align}
    G_{n,(i,j)}(x,y,t)=\sum_{s_1,s_2}\bar g_{4n+4}(&x-y-s_1i+s_2j,t)\bar g_{4n+4}(x+y-s_1i-s_2j,t)\nonumber\\+&\bar g_{4n+4}(x-y-s_1i+s_2j-2n-2,t)\bar g_{4n+4}(x+y-s_1i-s_2j-2n-2,t)\label{eq:checkerboard}
\end{align}
where $\bar g_{4n+4}(x,t)$ represents the one dimensional Green's function for a board with a period of $4n+4$. In other words, $\bar g_{4n+4}(x,t)$ describes the configuration after $t$ steps if initially the sites that are at multiples of $4n+4$ are on. The sum is over $s_1,s_2=\pm 1$. {\color{red}We explained the equation more before, but I guess we don't need to, do you think so?}

We can use this to prove a relationship between the cycle lengths of $\sigma_2$ and $\sigma_1$. The reasoning is based on a special case of Lemma \ref{lma:greenfunction},
\begin{lma}
    For a configuration where one cell is on in an automaton with periodic boundary conditions, the cycle length is equal to the cycle length of the automaton. (This is Lemma 3.4 of \cite{wolfram}.)
    \label{lma:greenfunctionperiodic}
\end{lma}
\begin{proof}
    The cycle length of a linear automaton is the lcm of the cycle lengths of all configurations where one site is on initially, by \ref{lma:greenfunction}. But for an automaton with periodic boundary conditions, such as $\bar{\sigma}_1(n)$ or $\bar\sigma_2(n)$, the cycle length is independent of which site is on. Thus the cycle length of the automaton is equal to the cycle length of any configuration where one light is on.
\end{proof}

This gives the following relationship between one and two dimensional automata:
\begin{thm}
    $CL[\sigma_2(n)]|CL[\bar{\sigma}_1(4n+4)]$
    \label{thm:cross}
\end{thm}
\begin{proof}
By lemma \ref{lma:greenfunction} the cycle length of $\sigma_2(n)$ is the lcm of the cycle lengths of all the configurations with one site on. The evolution of these configurations can be represented in terms of the evolution of configurations where a single site is on in $\bar{\sigma}_1(4n+4)$ by Eq. (\ref{eq:checkerboard}), so their cycle lengths divide the cycle length of a state of $\bar{\sigma}_1(4n+4)$ where one light is on. But by \ref{lma:greenfunctionperiodic}, this is the cycle length of $\bar{\sigma}_1(4n+4)$. Thus, $CL[\sigma_2(n)]|CL[\bar{\sigma}_1(4n+4)]$.
\end{proof}

\subsection{Geometric connection between one and two-dimensional automata}
Theorem \ref{thm:cross} gives a relationship between the two dimensional automaton on a finite grid and a one dimensional automaton on a periodic grid.
Now let us prove relationships between two automata, both on finite grids. We can do this by showing that $CL[\bar{\sigma}_1(4n+4)]|2CL[\sigma_1(n)]$, so that $CL[\sigma_2(n)]|2CL[\sigma_1(n)]$. 
This follows from two lemmas which can be proved by geometric arguments:
\begin{lma}
For any $n$, $CL[\bar{\sigma}_1(2n)]|2CL[\bar{\sigma}_1(n)]$
\label{lma:double}
\end{lma}
\begin{proof}
    This is proved in Lemma 3.6 of \cite{wolfram}. They show that $CL[\bar\sigma_1(2n)]=2CL[\bar\sigma_1(n)]$ when $n$ is not a power of 2. We will prove only that $CL[\bar\sigma_1(2n)]|2CL[\bar\sigma_1(n)]$ since this is simpler. Our argument is more geometrical, although it is equivalent to \cite{wolfram}'s argument.
    
    Consider two initial states on an infinite grid where one has the same pattern as the other but with the distances doubled. I.e., let the first state be $f_1(j)$, and let the second be related to the first by $f_2(2j)=f_1(j);f_2(2j+1)=0$.  Then the $2k^\mathrm{th}$ iterate of $f_2$ is the same as the $k^\mathrm{th}$ iterate of $f_1$ except transformed in the same way. First, check this for $k=1$: the first iterate of $f_2$ is given by $g(2j)=0$, $g(2j+1)=f_2(2j)+f_2(2j+2)=f_1(j)+f_1(j+1)$.  So the second iterate is given by $h(2j)=[f_1(j)+f_1(j+1)]+[f_1(j-1)+f_1(j)]=f_1(j+1)+f_1(j-1)$, $h(2j+1)=0$.  Thus the second iterate of $f_2$ is the same as the first iterate of $f_1$ but with the distances doubled. By induction it follows that the $2k^\mathrm{th}$ iterate of $f_2$ is the doubled version of the $k^\mathrm{th}$ iterate of $f_1$.

Now consider the cycle lengths of $\bar\sigma_1(n)$ and $\bar\sigma_1(2n)$.  Lemma \ref{lma:greenfunctionperiodic} shows that these are the cycle lengths for the configurations on an infinite board beginning with one 1 every $n$ sites or every $2n$ sites respectively. The latter configuration is the doubled version of the former configuration, so its $2k^\mathrm{th}$ iterate is the doubled version of the $k^\mathrm{th}$ iterate of the former configuration. The former configuration eventually repeats every $m$ steps where $m=CL[\bar{\sigma}_1(n)]$, so the latter configuration eventually repeats every $2m$ steps. Thus, $CL[\bar{\sigma}_1(2n)]|2m$, i.e., $CL[\bar{\sigma}_1(2n)]|2CL[\bar{\sigma}_1(n)]$.
\end{proof}

\begin{lma}
    For any $n$, $CL[\sigma_1(n)]=CL[\bar{\sigma}_1(2n+2)])$.
    \label{lma:boundaryconditions}
\end{lma}
\begin{proof}[Proof]
By Lemma \ref{lma:greenfunction}, the cycle length of $\sigma_1(n)$ is the lcm (for all $i$) of the cycle lengths of configurations on the finite grid where the cell at site $i$ is on and the rest are off. By Theorem \ref{thm:boundaryconditions}, all these cycle lengths are divisors of $CL[\bar\sigma_1(2n+2)]$. Now consider $i=1$. The initial configuration can be represented using the reflection principle by an infinite grid where the cells at $\pm 1$ and its translations by multiples of $2n+2$ are on initially. One can find a configuration before this state according to the evolution rule: the state where the cells at multiples of $2n+2$ are on.  Since this configuration has just one cell (modulo $2n+2$) equal to 1, its cycle length is equal to  $CL[\bar{\sigma}_1(2n+2)]$ by Lemma \ref{lma:greenfunctionperiodic}.  So the least common multiple of the cycle lengths for all initial sites $i$ is equal to $CL[\bar{\sigma}_1(2n+2)]$ (since for $i=1$, the cycle length of the configuration is equal to $CL[\bar{\sigma}_1(2n+2)]$ and for the other $i$'s the cycle length is a divisor of this).
\end{proof}

We can now show that
\begin{thm}
    $CL[\sigma_2(n)]|2CL[\sigma_1(n)]$
\end{thm}
\begin{proof}
By theorem \ref{thm:cross}, $CL[\sigma_2(n)]|CL[\bar\sigma_1(4n+4)]$. By Lemma \ref{lma:double}, $CL[\bar\sigma_1(4n+4)]|2CL[\bar\sigma_1(2n+2)]$, and by Lemma \ref{lma:boundaryconditions},
$CL[\bar\sigma_1(2n+2)]=CL[\sigma_1(n)]$. So $CL[\sigma_2(n)]|2CL[\sigma_1(n)]$.
\end{proof}

This shows that the cycle length of the two dimensional automaton is not very large compared to the one dimensional automaton (that is the most surprising part of the theorem we have considered in this article).
We were not able to find a proof of the whole theorem, i.e., the exact equality of the cycle lengths in one and two dimensions, by geometric methods. An approach that nearly works is the following: one can prove that $CL[\bar{\sigma}_2(2n+2)]=CL[\bar{\sigma}_1(2n+2)]$. By Lemma~\ref{lma:boundaryconditions}, $CL[\bar{\sigma}_1(2n+2)]=CL[\sigma_1(n)]$ so the only thing one has to show is $CL[\sigma_2(n)]=CL[\bar{\sigma}_2(2n+2)]$, the two dimensional analogue to Lemma~\ref{lma:boundaryconditions}.  However, the one dimensional argument does not work because there is not a configuration of the $n\times n$ board that obviously has the same cycle length as a single 1 the $2n+2\times 2n+2$ board with periodic boundary conditions.  For example, if initially the cell $(0,0)$ at the intersection of two mirrors is 1 (and its symmetric images) then this does not evolve into a configuration that corresponds to a configuration on the finite board because there are always 1's along the mirrors.

\end{document}